\documentclass[conference,compsoc]{IEEEtran}
% Some/most Computer Society conferences require the compsoc mode option,
% but others may want the standard conference format.

\ifCLASSOPTIONcompsoc
  % IEEE Computer Society needs nocompress option
  % requires cite.sty v4.0 or later (November 2003)
  \usepackage[nocompress]{cite}
\else
  % normal IEEE
  \usepackage{cite}
\fi

%% imports
\usepackage{siiscmds}                           % The SIISTeX Spellbook
% \usepackage{draftwatermark}
%%%%%%%%%%%%%%%%%%%%%%%%%%%%%%%%%%%%%%%%%%%%%%%%%%%%%%%%%%%%%%%%%%%%%%%%
\usepackage[ruled,linesnumbered]{algorithm2e}    % Floating algorithm environment with algorithmic keywords
\usepackage{algpseudocode}
\SetCommentSty{itshape}
\usepackage{booktabs}                            % Publication quality tables in LaTeX
\usepackage{centernot}                           % Centered \not command
\usepackage{colortbl}                            % Add colour to LaTeX tables
\usepackage{graphicx}                            % Enhanced support for graphics
\usepackage{hhline}                              % Better horizontal lines in tabulars and arrays
\usepackage{lipsum}                              % Easy access to the Lorem Ipsum dummy text
\usepackage{mathtools}                           % Mathematical tools to use with amsmath
\usepackage[all]{nowidow}                        % Avoid widows
\usepackage{optidef}                             % Environments for writing optimization problems
\usepackage{pgfplotstable}                       % Loads and processes numerical tables
\usepackage{subcaption}                          % Support for sub-captions
\usepackage[binary-units]{siunitx}                             % International System of Units
\usepackage{substr}                              % Substring Functions with LaTeX
\usepackage{tikz}                                % Create PostScript and PDF graphics in TEX
\usepackage{tcolorbox}                           % Coloured boxes, for LATEX examples and theorems, etc
\usepackage{xcolor}                              % Driver-independent color extensions for LaTeX and pdfLaTeX
\usepackage{circledsteps}
% \usepackage{siunitx}
% % Create hyperlinks in PDF
\usepackage[breaklinks]{hyperref}
\definecolor{light-gray}{gray}{0.968}

\usepackage{xspace}                              % Text Coloring
\usepackage{glossaries}
\usepackage{multirow}
\usepackage{float}
\usepackage{url}
\usepackage{pgf}
\usepackage{wrapfig}
\usepackage{pifont}% http://ctan.org/pkg/pifont   % for digits in black circle
\usepackage{epsfig}
\usepackage{endnotes}
\usepackage{textcomp}
\usepackage{comment}
\usepackage{enumitem}
\usepackage{microtype}
\usepackage{tabularx}
\usepackage{amsthm}
\usepackage[para]{threeparttable}
%%%%%%%%%%%%%%%%%%%%%%%%%%%%%%%%%%%%%%%%%%%%%%%%%%%%%%%%%%%%%%%%%%%%%%%%

\usepackage{tikz}
\newcommand*\emptycirc[1][1ex]{\tikz\draw[thick] (0,0) circle (#1);} 
\newcommand*\halfcirc[1][1ex]{%
  \begin{tikzpicture}
  \draw[fill] (0,0)-- (90:#1) arc (90:270:#1) -- cycle ;
  \draw[thick] (0,0) circle (#1);
  \end{tikzpicture}}
\newcommand*\fullcirc[1][1ex]{%
  \begin{tikzpicture}
  \draw[fill] (0,0)-- (0:#1) arc (0:360:#1) -- cycle ;
  \draw[thick] (0,0) circle (#1);
    \end{tikzpicture}}
% \newcommand*\fullcirc[1][1ex]{\tikz\fill (0,0) circle (#1);} 

%% import options

%% paper-specific aliases %%
                % SIIS
     % conference template kerfuffler
                  % ctk

\newcommand{\tool}{\texttt{PAC}}
\newcommand{\FSYNC}{\texttt{FSYNC}}

\newtheorem{theorem}{Theorem}

\definecolor{OliveGreen}{rgb}{0,0.6,0}

\newlist{rqlist}{enumerate}{3}
\setlist[rqlist]{label=\textbf{RQ\arabic*}.,before=\raggedright,leftmargin=40pt,ref=RQ\arabic*}

\newcommand{\newtxt}[1]{#1}
\newcommand{\oldtxt}[1]{}

% correct bad hyphenation here
\hyphenation{op-tical net-works semi-conduc-tor}

\newcommand{\itembase}[1]{\setlength{\itemsep}{#1}}

\begin{document}

\title{Efficient Storage Integrity in Adversarial Settings}

% author names and affiliations
% use a multiple column layout for up to three different
% affiliations
% \author{\IEEEauthorblockN{Michael Shell}
% \IEEEauthorblockA{School of Electrical and\\Computer Engineering\\
% Georgia Institute of Technology\\
% Atlanta, Georgia 30332--0250\\
% Email: http://www.michaelshell.org/contact.html}
% \and
% \IEEEauthorblockN{Homer Simpson}
% \IEEEauthorblockA{Twentieth Century Fox\\
% Springfield, USA\\
% Email: homer@thesimpsons.com}
% \and
% \IEEEauthorblockN{James Kirk\\ and Montgomery Scott}
% \IEEEauthorblockA{Starfleet Academy\\
% San Francisco, California 96678-2391\\
% Telephone: (800) 555--1212\\
% Fax: (888) 555--1212}}

% conference papers do not typically use \thanks and this command
% is locked out in conference mode. If really needed, such as for
% the acknowledgment of grants, issue a \IEEEoverridecommandlockouts
% after \documentclass

% for over three affiliations, or if they all won't fit within the width
% of the page (and note that there is less available width in this regard for
% compsoc conferences compared to traditional conferences), use this
% alternative format:
% 
\IEEEoverridecommandlockouts % allow \thanks
\author{
\IEEEauthorblockN{
Quinn Burke\IEEEauthorrefmark{1}\thanks{\IEEEauthorrefmark{1}Corresponding author (email: qkb@cs.wisc.edu)}, Ryan Sheatsley, Yohan Beugin, Eric Pauley,\\Owen Hines, Michael Swift, Patrick McDaniel}
\IEEEauthorblockA{University of Wisconsin-Madison}}
%\author{\IEEEauthorblockN{Michael Shell\IEEEauthorrefmark{1},
%Homer Simpson\IEEEauthorrefmark{2},
%James Kirk\IEEEauthorrefmark{3}, 
%Montgomery Scott\IEEEauthorrefmark{3} and
%Eldon Tyrell\IEEEauthorrefmark{4}}
%\IEEEauthorblockA{\IEEEauthorrefmark{1}School of Electrical and Computer Engineering\\
%Georgia Institute of Technology,
%Atlanta, Georgia 30332--0250\\ Email: see http://www.michaelshell.org/contact.html}
%\IEEEauthorblockA{\IEEEauthorrefmark{2}Twentieth Century Fox, Springfield, USA\\
%Email: homer@thesimpsons.com}
%\IEEEauthorblockA{\IEEEauthorrefmark{3}Starfleet Academy, San Francisco, California 96678-2391\\
%Telephone: (800) 555--1212, Fax: (888) 555--1212}
%\IEEEauthorblockA{\IEEEauthorrefmark{4}Tyrell Inc., 123 Replicant Street, Los Angeles, California 90210--4321}}

% use for special paper notices
%\IEEEspecialpapernotice{(Invited Paper)}

\IEEEoverridecommandlockouts
\makeatletter
\def\ps@customtitlepagestyle{%
  \def\@oddhead{%
    \parbox{\textwidth}{%
      \centering
      % \footnotesize
      \makebox[\textwidth]{%
        \hfill\rule{0.15\textwidth}{0.4pt}\hspace{1em}%
        2025 IEEE Symposium on Security and Privacy (S\&P)%
        \hspace{1em}\rule{0.15\textwidth}{0.4pt}\hfill
      }%
    }%
  }%
  \def\@oddfoot{}%
}
\makeatother

% make the title area
\maketitle
\thispagestyle{customtitlepagestyle} % Apply your custom header
\pagestyle{plain} % Page numbers on all pages

% Show page numbers
% \thispagestyle{plain}
% \pagestyle{plain}

% As a general rule, do not put math, special symbols or citations
% in the abstract
\begin{abstract}
Storage integrity is essential to systems and applications that use untrusted storage (e.g., public clouds, end-user devices). However, known methods for achieving storage integrity either suffer from high (and often prohibitive) overheads or provide weak integrity guarantees.
In this work, we demonstrate a hybrid approach to storage integrity that simultaneously reduces overhead while providing strong integrity guarantees. Our system, partially asynchronous integrity checking (\tool{}), allows disk write commitments to be deferred while still providing guarantees around read integrity. \tool~delivers a $5.5\times$ throughput and latency improvement over the state of the art, and $85\%$ of the throughput achieved by non-integrity-assuring approaches. In this way, we show that untrusted storage can be used for integrity-critical workloads without meaningfully sacrificing performance.

\end{abstract}

% no keywords

% For peer review papers, you can put extra information on the cover
% page as needed:
% \ifCLASSOPTIONpeerreview
% \begin{center} \bfseries EDICS Category: 3-BBND \end{center}
% \fi
%
% For peerreview papers, this IEEEtran command inserts a page break and
% creates the second title. It will be ignored for other modes.
\IEEEpeerreviewmaketitle
\section{Introduction}
\label{sec:introduction}
Storage integrity is essential to systems and applications that use untrusted storage (e.g., public clouds, end-user devices).  
While robust systems for ensuring the integrity of the compute environment have become commonplace on cloud providers~\cite{aws_sev_snp,gcp_confidential_vm,azure_confidential_vm} and mobile devices~\cite{android-dm-verity}, the storage~\cite{aws-ebs,gc-pdisk,azure-mdisk} connected to these environments do not guarantee that returned data is authentic (i.e., was actually written by the application), transactionally consistent (i.e., the storage state is globally correct with respect to storage operations), and fresh (i.e., has not been rolled back to a previous state). When these properties are compromised, an adversarial storage device (be it a malicious cloud provider, backdoored physical device, or low-level system malware) can replay financial transactions, undo deployment of vulnerability mitigations, and otherwise compromise the integrity of the system as a whole~\cite{johannesmeyer2024practical,huang_survey_2014,bohling2020subverting,kurmus2017random}.

While local integrity is readily achieved using authenticated encryption (e.g., AES/AEAD), global transactional consistency and freshness on untrusted storage requires the use of globally-consistent data structures. Researchers have converged on Merkle hash trees---which recursively hash the data stored in blocks to one single checksum---to track the globally-consistent state of a storage device~\cite{merkle1989certified,gassend2003caches,mckeen_innovative_2013,khati2017full,priebe_sgx-lkl_2020,android-dm-verity}. When tracked using trusted hardware~\cite{tsai_graphene-sgx_nodate,priebe_enclavedb_2018,perez2006vtpm,taassori2018vault} (e.g., secure counters offered by the CPU), this approach provides transactional consistency and prevents an adversary from rolling the storage system back to a previous state. Such a solution has also been adopted by practitioners, and is available in Linux as dm-verity~\cite{android-dm-verity}.

Unfortunately, using hash trees to achieve global consistency comes at a dramatic cost to performance, particularly as storage sizes scale~\cite{crosby2011authenticated}. Prior works have approached this problem in two ways: (a) by optimizing Merkle tree implementations using caches~\cite{gassend2003caches} and new tree implementations~\cite{taassori2018vault,chakraborti2017dm,burke2025scalable}, and (b) by relaxing the storage threat model to no longer provide strong protections against adversarial data rollbacks and data manipulation~\cite{arasu2021fastver,arasu2017concerto}. Under these relaxed threat models, techniques defer the durable storage of written data \textit{and} the verification of disk reads, under the premise that these operations incur unacceptable performance overhead. However, this means that an application can proceed after reading incorrect data, taking actions that compromise overall system integrity. In effect, the existing space of disk integrity techniques presents a seemingly-fundamental tradeoff between strong security guarantees and performance.

In this work, we examine this security-performance tradeoff. Current techniques exist on two ends of a spectrum: traditional \textit{synchronous} approaches persist and verify all writes and reads in order and as they are performed, and \textit{asynchronous} approaches defer each of these in the interest of performance. We hypothesize that a \textit{hybrid} of asynchronous writes and coordinated synchronous reads can virtually eliminate the overhead of transactional disk integrity, while still providing the same security guarantees as synchronous approaches. Our approach, \underline{p}artially \underline{a}synchronous integrity \underline{c}hecking (\tool), ensures complete transactional integrity of disk contents at runtime, and provides rollback protection of fully-committed writes under adversarial fault injection at the storage layer (the same guarantees offered by conventional disk write semantics in a benign setting).

We evaluate \tool{} in two stages, beginning with a formal security analysis. We establish two guarantees: (1) a \textit{read guarantee}, wherein reads always reflects the most recent acknowledged write at the time data is returned, and (2) a \textit{write guarantee}, which ensures that the sealed and durable Merkle root always reflects the most recent \FSYNC{} call. We provide a set of proofs to establish these guarantees under \tool's hybrid approach.

We next implement \tool{} as a Linux block device driver and evaluate its performance, measuring its overhead, scalability, and memory/storage trade-off. \tool{} delivers $>85\%$ of the performance of block-level AEAD while \textit{additionally} providing global transactional integrity. It scales to large storage devices, with throughputs up to $5.5\times$ that of prior hash tree designs. Finally, \tool{} achieves these with minimal additional memory requirements. In these ways, \tool{} enables the strongest storage integrity guarantees at minimal cost. 

Software systems rely on the integrity of both computation and storage to ensure end-to-end outcomes. Coupled with advances in trusted execution technology~\cite{mckeen_innovative_2013,tsai_graphene-sgx_nodate,priebe_sgx-lkl_2020}, our work shows that the transactional integrity and freshness of underlying storage can be achieved at minimal overhead. We anticipate that \tool{} will enable the practical use of trusted storage in an expanded space of performance- and security-critical software systems. \newtxt{Our code is plug-and-play into standard Linux systems and open-sourced at \href{https://github.com/MadSP-McDaniel/pac}{https://github.com/MadSP-McDaniel/pac}}.

\section{Background}
\label{sec:background}

\begin{figure}[t]
    \centering
    \includegraphics[width=\linewidth]{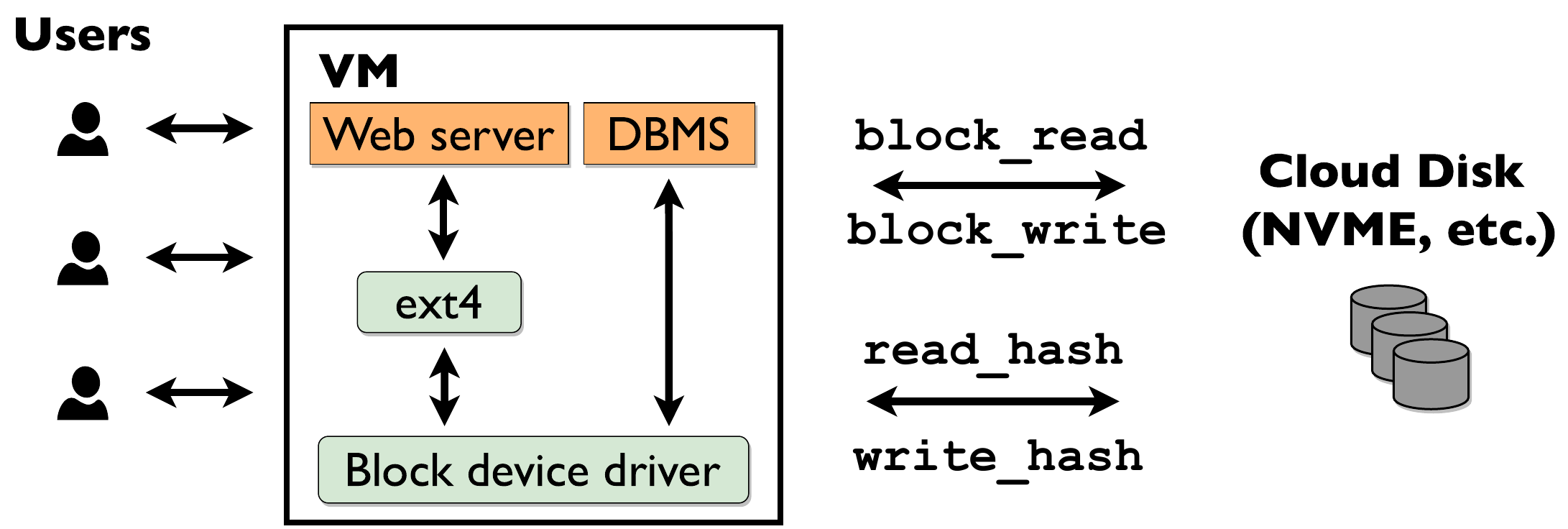}
    \caption{We consider an IaaS deployment model where an application runs inside a guest VM and stores data on a fast, local NVMe disk.}
    % We assume that VM memory contents are trusted and cloud storage devices are untrusted; VM memory can be protected with trusted execution primitives~\cite{aws_sev_snp}.
    \label{fig:system-model}
\end{figure}

\shortsection{Cloud Disks}
Block storage is a backbone of modern public cloud infrastructure~\cite{aws-ebs,gc-pdisk,azure-mdisk}. While there are various deployment models for cloud applications and storage, we consider a standard Infrastructure-as-a-Service (IaaS) deployment where an application runs inside a guest VM and reads and writes to a fast, local NVMe disk attached to the VM (\autoref{fig:system-model}). The application may be end-user facing (e.g., a database or web server) or the last hop in a networked storage system (e.g., a file server for other cloud-based applications).

\shortsection{Merkle Hash Trees}
Merkle hash trees are the standard method to protect the integrity (notably, the freshness) of storage---largely due to their proven theoretical efficiency~\cite{merkle1989certified,gassend2003caches,mckeen_innovative_2013,tsai_graphene-sgx_nodate,priebe_sgx-lkl_2020}. They play a pivotal role in ensuring boot disk integrity with Linux \textit{dm-verity}~\cite{android-dm-verity} and other emerging cloud runtimes. For storage, they are implemented as a custom block device driver that wraps a lower-level driver. The driver intercepts block I/O requests and implements the hash tree logic.

As shown in~\autoref{fig:merkle-hash-tree}, a Merkle hash tree is a balanced binary tree, with each node in the tree containing a hash value. A leaf node contains the MAC of a data block (and a cipher IV when encrypting data), and an internal node contains the SHA256 hash of the concatenation of the hashes of its two children. Internal node hashes are iteratively computed from leaf to root along the authentication path. The root hash \textit{authenticates} (i.e., asserts the correctness of) the current disk contents and is stored in a secure and non-volatile memory region that is inaccessible to an attacker~\cite{perez2006vtpm,taassori2018vault}.

There are two basic operations on a hash tree: \textit{verification} and \textit{update}. When the driver receives a block read call, the (encrypted) block data, MAC, and cipher IV are first fetched from disk. The MAC is checked for consistency against the fetched data. Next, it is verified against the root hash by recursively fetching, concatenating, and computing hashes up to the root. If the computed hash matches the known root, verification succeeds. When a block is written, a new MAC is computed (\texttt{H0}) and the hash tree is updated (\texttt{H0} ancestors) similarly. The new root hash is then saved to the secure non-volatile location. To prevent rollback attacks, the root can be \textit{sealed}---i.e., bound to a tamper-resistant counter, signed, and saved to non-volatile storage~\cite{parno2011memoir,strackx2016ariadne,angel2023nimble}.

\begin{figure}[t]
    \centering
    \includegraphics[width=\linewidth]{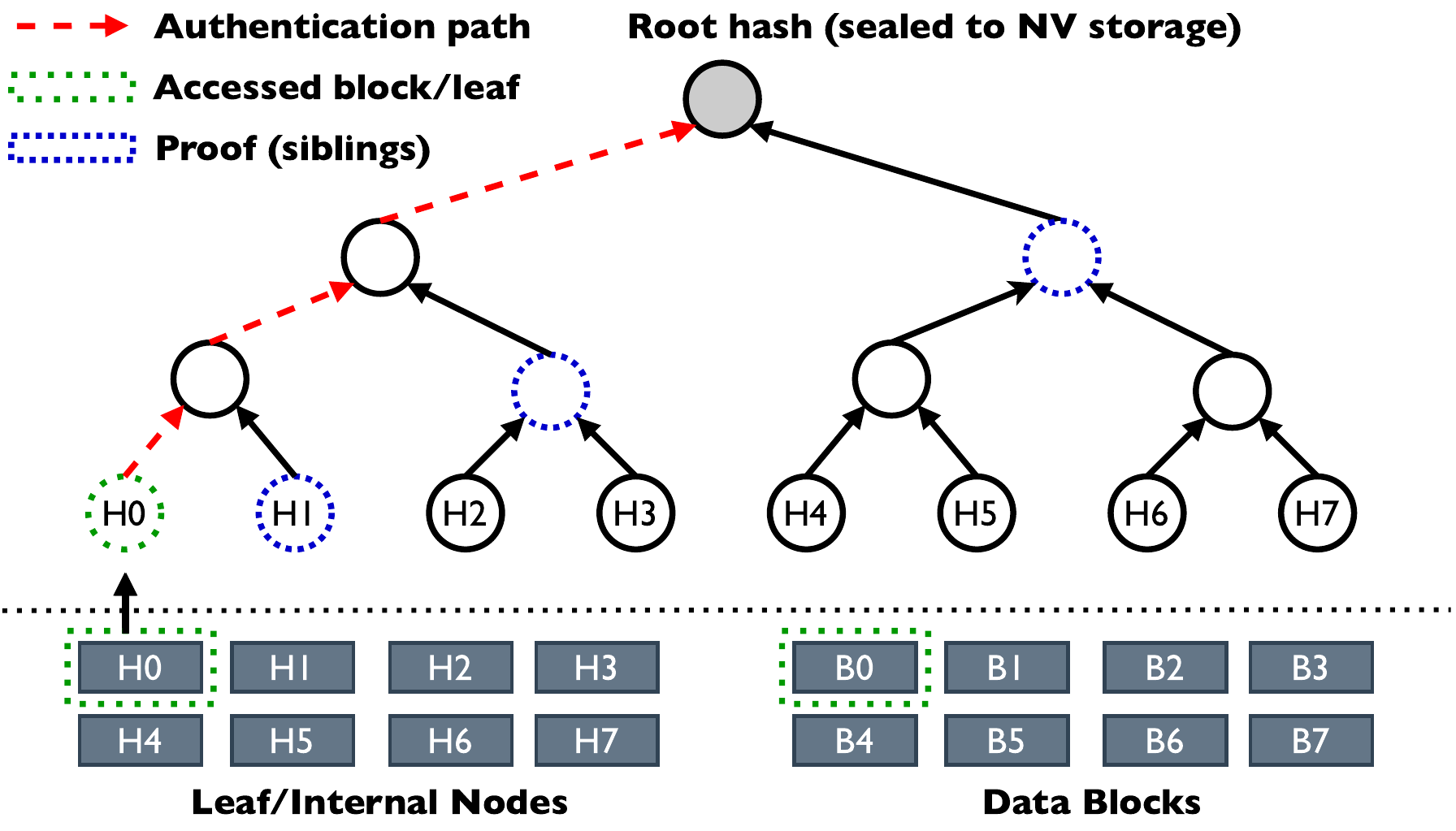}
    \caption{A Merkle hash tree protects the integrity of data read from/written to a storage device.}
    \label{fig:merkle-hash-tree}
\end{figure}

\section{Security Model}
\label{sec:sec_model}

\shortsection{Trust Model} 
We consider a standard IaaS deployment model (see~\autoref{fig:system-model}) where applications run inside guest VMs and read and write data to fast local NVMe disks. In this setting, we assume that all VM contents (code and data in memory) and the Merkle root are trusted and protected by hardware-based memory access controls. This can be ensured with confidential virtual machine technology such as AMD SEV-SNP~\cite{aws_sev_snp,gcp_confidential_vm,azure_confidential_vm}. We assume the storage devices and the hypervisor that manages access to storage (virtual or physical) are untrusted. The trusted and untrusted components have a simple block read/write interface (see \autoref{fig:sec-model}).

\shortsection{Threat Model}
We consider a privileged attacker who has access to the hypervisor or storage backbone in a public cloud data center~\cite{tsai_graphene-sgx_nodate,arnautov_scone_nodate}. This could be a malicious co-tenant with escalated privilege or a malicious cloud administrator. We assume the attacker attempts to execute \textit{data-only attacks}, attacks that are based on maliciously crafted data being returned to the block device driver by a malicious storage device; they present a significant threat to modern cloud applications~\cite{johannesmeyer2024practical,kurmus2017random,galanou2023trustworthy,bohling2020subverting,hu2016data}. The attacker has the ability to access, corrupt, swap, drop, record, inject, or replay any data across the storage interface. \newtxt{We also assume that the attacker has control over vCPU scheduling and can thus arbitrarily suspend or delay vCPU execution~\cite{li2019exploiting,schluter2024wesee}}. Note that we consider denial-of-service attacks out-of-scope, but delaying program execution can also affect integrity guarantees; we defer further discussion to~\autoref{sec:pac}.

\newtxt{We focus particularly on replay attacks; the other attacks are protected via keyed hashes (block MACs). Replay attacks} could manifest in the OS loader reading old, vulnerable versions of binaries, or file systems making incorrect access control decisions, among other attacks. Replay attacks cause externally visible effects (i.e., affect client decision-making) which are infeasible to undo and thus must be prevented.

\shortsection{Security Requirements} 
To prevent unauthorized access, the confidentiality of data can be ensured with standard AES-GCM encryption. Ensuring integrity requires guaranteeing authenticity (i.e., that some data originates from a trusted party) and freshness (i.e., that read data is the most recent version written by a trusted party). The MACs output by AES-GCM provide authenticity. Storing MACs in a secure location (after each update) then ensures freshness. However, keeping all MACs in secure memory is infeasible. A Merkle hash tree thus provides an efficient data structure to facilitate verifications and updates; it only requires keeping the Merkle root in secure memory~\cite{avanzi2022cryptographic,gassend2003caches,arasu2021fastver,matetic2017rote}.

\begin{figure}[t]
    \centering
    \includegraphics[width=\linewidth]{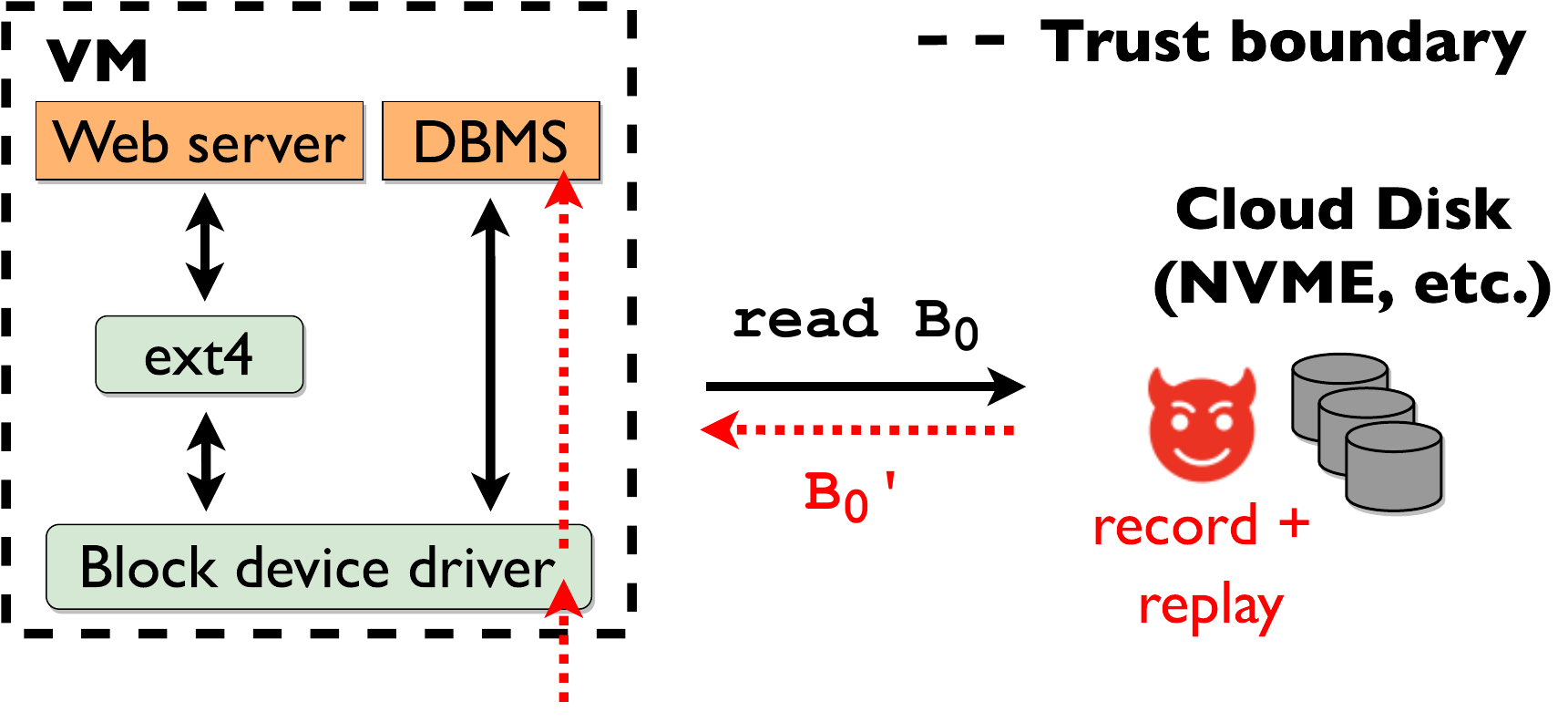}
    \caption{We assume that VM memory contents are trusted and cloud disks are untrusted; VM memory can be protected with trusted execution primitives~\cite{aws_sev_snp}.}
    \label{fig:sec-model}
\end{figure}

\section{Problem Statement}
\label{sec:motivation}

Merkle hash trees have been widely celebrated because of their proven theoretical efficiency~\cite{crosby2011authenticated}. Asymptotically, the $O(\log n)$ traversal costs have translated into low performance overheads compared to other data structures that ensure integrity. However, recent works have challenged this assertion. For example, it was shown that tree traversal costs can manifest in more than $10\times$ and $3\times$ slowdowns in the context of secure memories~\cite{taassori2018vault,feng2021scalable} and cloud storage~\cite{arasu2021fastver,arasu2017concerto,burke2025scalable}, respectively. 

Reducing performance overheads requires fundamentally rethinking the hash tree design at both the data structure- and algorithm-level. However, there is a tradeoff between security and performance. This section shows that prior works have failed to deliver a solution that provides both and motivates our search for a better approach.

% The goal of this paper is to quantitatively examine why this is the case and develop a solution that can close the performance gap.

\begin{figure}[!t]
    \centering
    \includegraphics[width=\linewidth]{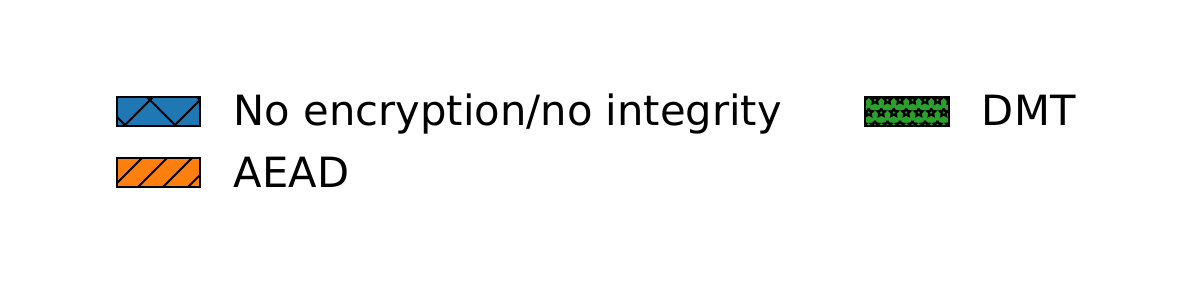}\\
    \begin{subfigure}{\linewidth}
        \centering
        \includegraphics[width=\linewidth]{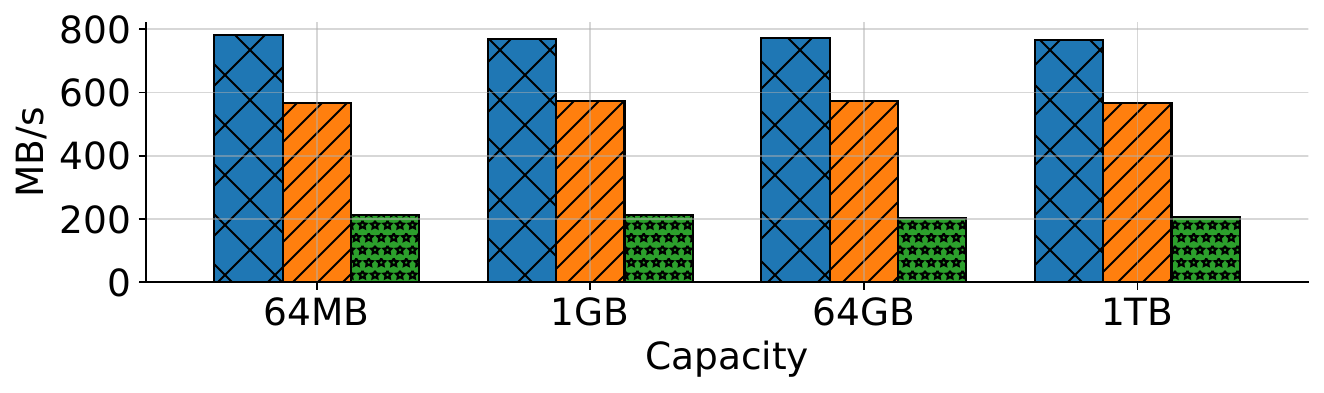}
        \caption{Aggregate read/write throughput vs. capacity.}
        \label{fig:agg_bw_cap_motivation}
    \end{subfigure}
    \hfill
    \begin{subfigure}{\linewidth}
        \centering
        \includegraphics[width=\linewidth]{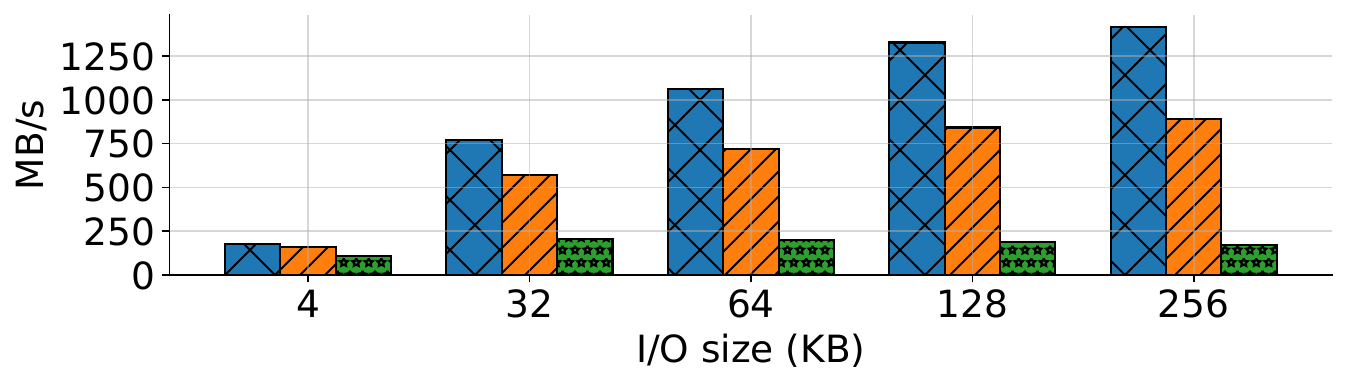}
        \caption{Aggregate read/write throughput vs. I/O size.}
        \label{fig:agg_bw_iosize_motivation}
    \end{subfigure}
    \caption{Aggregate read/write throughput vs. capacity and I/O size. Experiment parameters: Workload: Zipf(2.5), Read ratio: 1\%, Cache size: 10\%, Capacity: 1~TB, I/O size: 32~KB, Threads: 1, I/O depth: 32.}
    \label{fig:bottleneck11}
\end{figure}

\shortsection{Data Structure Optimizations}
We begin with a motivating experiment in~\autoref{fig:bottleneck11} examining the state-of-the-art hash tree design Dynamic Merkle Trees (DMTs)~\cite{burke2025scalable}. DMTs were recently proposed as alternative to the balanced, binary hash trees traditionally used to secure disks (e.g., through Linux \texttt{dm-verity}). They are based on splay trees, which are dynamic, unbalanced tree structures that self-adjust on the fly based on observed workload patterns. DMTs were shown to closely approximate an optimal tree structure. We defer implementation details to~\autoref{sec:eval}, but note that like prior works, the hash tree is implemented in a block device driver that intercepts block I/Os and performs verifications and updates when synchronously reading and writing data. 

\autoref{fig:bottleneck11} shows how the performance of DMTs changes with respect to disk capacity and I/O size. \autoref{fig:agg_bw_cap_motivation} shows that DMTs deliver 200~MB/s throughput across capacities representative of very small and very large disks. However, at 200~MB/s DMTs still observe a 67\% throughput loss relative to the AEAD (Encryption/no integrity) baseline. Note that this baseline represents an encrypted disk with no integrity protection, while DMT represents an encrypted disk that is protected by a Merkle tree. Further, disk throughput typically grows with larger I/O sizes, but \autoref{fig:agg_bw_iosize_motivation} shows that DMT performance does not scale with I/O size: performance plateaus and losses are nearly 80\% with 256~KB I/Os.

Caching hashes in secure memory (i.e., in a protected memory region) is also a standard optimization~\cite{gassend2003caches,arasu2021fastver}. Caching reduces I/O costs associated with fetching hashes during a verification or update and prompts early exits during verification (as the hashes were previously authenticated before being placed in the cache). We found that with a small cache, hit rates are consistently high (99.9\%). The implications of this are twofold. Verification costs are thus largely hidden---particularly under read-heavy workloads. However, these graphs examine write-heavy workloads, which reflect observations of real-world block storage systems~\cite{li2023depth}. For write-heavy workloads, hash caching clearly only helps to an extent, and new optimizations are needed.

\shortsection{Algorithm Optimizations}
Given the above, the most immediate question is how we can leverage algorithm-level optimizations to improve performance. A natural starting point is to use \textit{asynchronous execution}, which is a  building block of modern storage systems~\cite{didona2022understanding,lee_asynchronous_nodate}. Intuitively, asynchronous execution means that hash tree operations are decoupled from data operations and have minimal performance impact on the critical path. This can be achieved through means such as batched processing.

In a batching scheme, whenever data is read or written, a request object for a verification or update is created and placed on a queue. Whenever the queue hits the specified $batch\_size$ limit, the application thread blocks (i.e., \textit{checkpoints}) until all pending verifications and updates are complete. This can improve performance by removing hash tree operations from the critical path entirely. Costs can then be amortized by accumulating and processing large batches of requests at a more opportune time. An overview of this process is shown in~\autoref{fig:batching}.

\begin{figure}[!t]
    \centering
        \includegraphics[width=0.9\columnwidth]{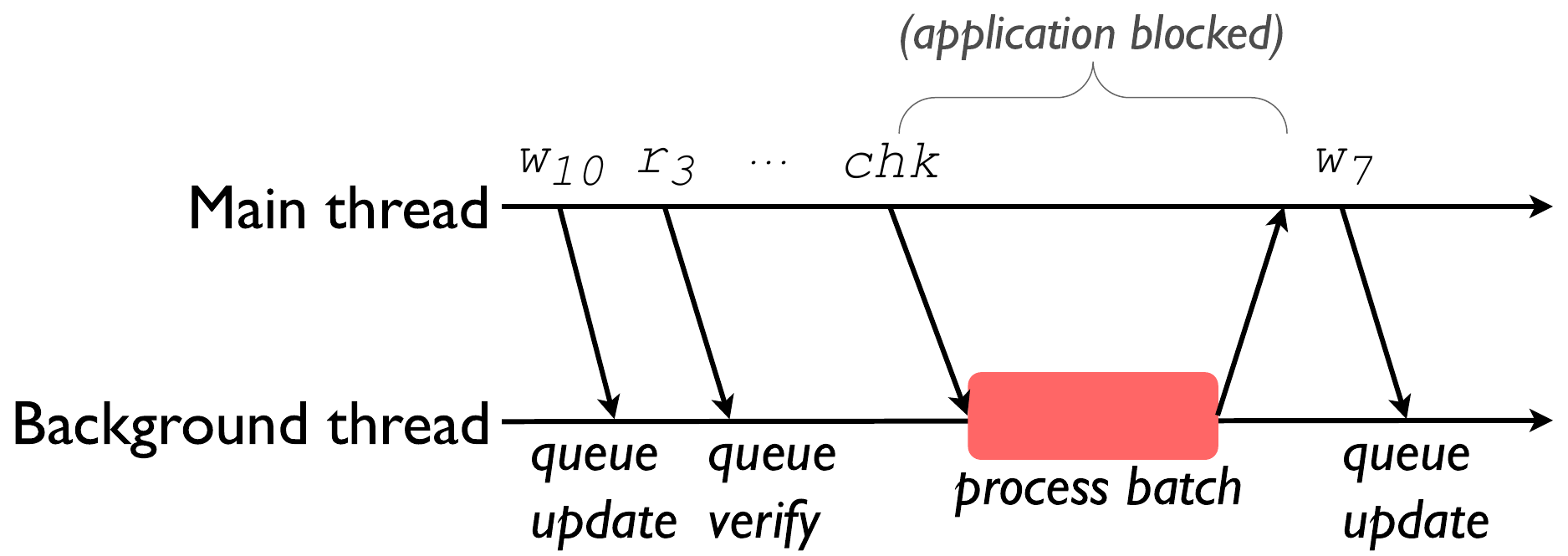}
        \caption{Batched processing (Note: \texttt{chk}=checkpoint step).}
        \label{fig:batching}
\end{figure}

Recent works have begun to embrace this idea to optimize integrity checks in real-world key-value stores~\cite{arasu2021fastver,arasu2017concerto}. However, batching introduces a significant security risk. Notably, reads always return unverified data up the call stack, and thus data-only attacks are not immediately detectable. This directly violates integrity (freshness) guarantees. 

To quantify the extent to which these vulnerabilities can be practically exploited, \autoref{fig:batch} characterizes the window of vulnerability (WoV) introduced by batching. The window of vulnerability describes the time between when data was returned to the caller and when it was actually verified. There are two key observations. First, \autoref{fig:tp-vs-batch} shows that throughput improvements are only observed with large batch sizes ($>10^5$). Moreover, \autoref{fig:wov-vs-batch} shows that the mean window of vulnerability grows with a power law w.r.t. batch size. At a batch size of $10^5$ blocks, the mean window of vulnerability is 800~$ms$, while at a batch size of $10^6$ it is 8 seconds. For reference, FastVer~\cite{arasu2021fastver} uses a batch size of 4~M requests. There is a clear security-performance tradeoff here: batching only helps with large batch sizes, but large batch sizes introduce long windows of vulnerability. 

\begin{figure}[!t]
    \centering
    \begin{subfigure}{\linewidth}
        \centering
        \includegraphics[width=\linewidth]{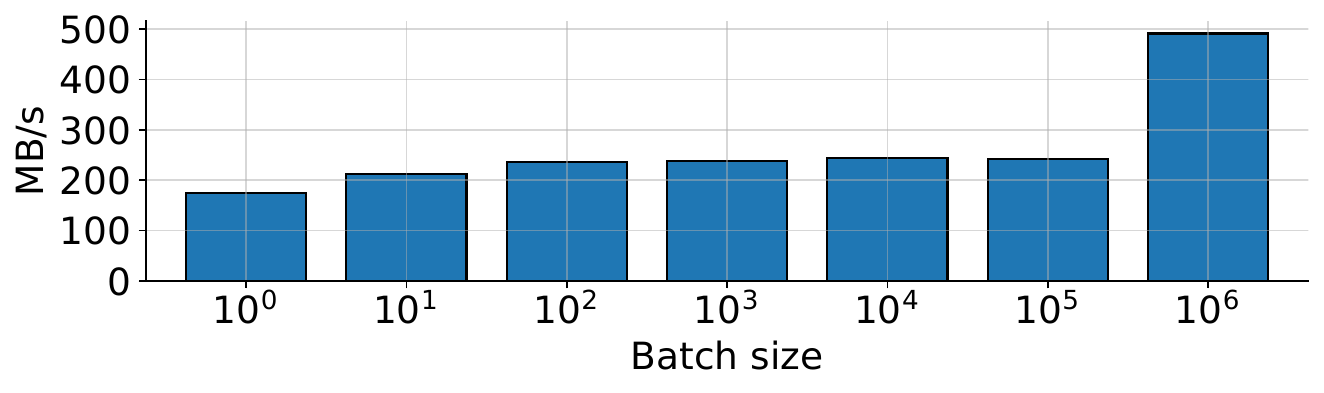}
        \caption{Aggregate read/write throughput vs.batch size.}
        \label{fig:tp-vs-batch}
    \end{subfigure}
    \hfill
    \begin{subfigure}{\linewidth}
        \centering
        \includegraphics[width=\linewidth]{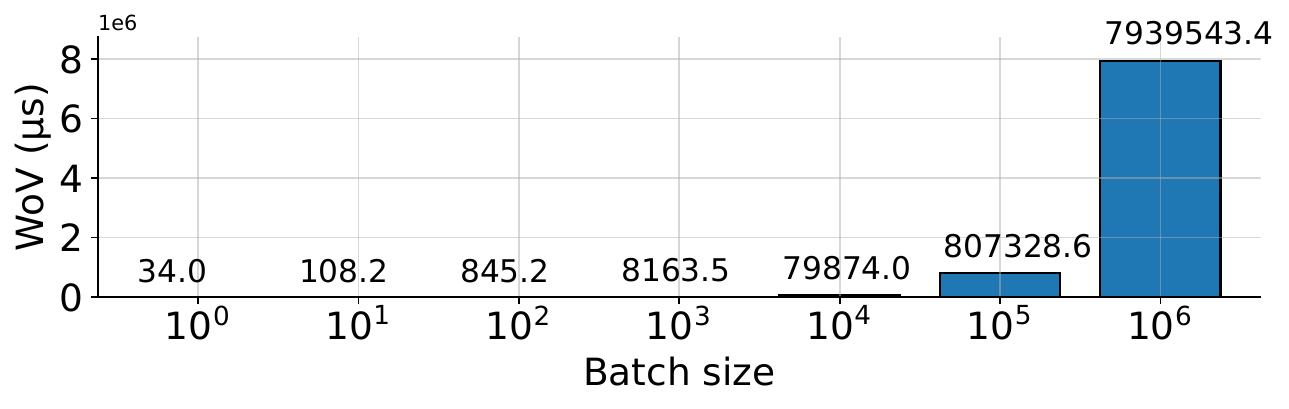}
        \caption{Window of vulnerability (WoV) length vs. batch size.}
        \label{fig:wov-vs-batch}
    \end{subfigure}
    \caption{Meaningful performance gains are only realized with larger batch sizes, but large batch sizes introduce long windows of vulnerability.}
    \label{fig:batch}
\end{figure}

\shortsection{Our Objective}
These prior works have led the community to an understanding that in order to achieve storage integrity one must either give up performance or weaken security guarantees. We posit that this tradeoff can be avoided.
This leads us to the question: \textit{Can we capture the performance advantages of asynchrony without sacrificing integrity guarantees?} In the following, we show that it is indeed possible.

\section{Partially Asynchronous Integrity Checking (\tool)}
\label{sec:pac}

\begin{table*}[t]
    \centering
    \normalsize
        \caption{Comparison of security guarantees and core mechanisms of different approaches to integrity checking. Broadly, whether each approach provides runtime consistency (integrity) checks and/or rollback protection depends on when updates and verifications are executed. For example, Batching (i.e., DMT + Batching) does not provide runtime consistency because verifications/updates are asynchronous (and thus replay detection is deferred), and achieves good performance. \tool{} (i.e., DMT + \tool{}) provides runtime consistency via synchronous verifications and rollback protection, while leveraging asynchronous updates to improve performance.}
    \begin{threeparttable}
    \begin{tabularx}{\linewidth}{Xccccc}
        \toprule
        & AEAD& \texttt{dm-verity}~\cite{android-dm-verity}\tnote{1}&DMT~\cite{burke2025scalable} & DMT + Batching & DMT + \tool  \\
        \midrule
        Runtime consistency & \emptycirc{}&\fullcirc{}& \fullcirc{}& \halfcirc{}\makebox[0pt][l]{\tnote{2}} & \fullcirc{}  \\
        Rollback protection & \emptycirc{}&\emptycirc{}& \fullcirc{} & \fullcirc{} & \fullcirc{}  \\
        Updates (writes) & \textit{Sync} & \textit{Sync}& \textit{Sync} & \textit{Async} & \textit{Async}  \\
        Verifications (reads) & \textit{Sync}& \textit{Sync} & \textit{Sync} & \textit{Async} & \textit{Sync}  \\
        \bottomrule
    \end{tabularx}
    \begin{tablenotes}
        \item[1] This denotes \texttt{dm-verity} variants that support writes.
        \item[2] Batching has deferred runtime consistency checking.
  \end{tablenotes}
    \end{threeparttable}
    \label{tab:properties}
\end{table*}

This section shows how we can use concurrency to maintain the security guarantees of a synchronous hash tree design while exploiting the performance advantages of an asynchronous hash tree design.

\shortsection{Security Guarantees}
To provide integrity on untrusted storage, we design our system around two security guarantees:

\vspace{1em}

\begin{enumerate}
\itembase{6pt}
    \item \textbf{Read Guarantee:} The integrity of read data is always verified before the data is returned to applications; reads are never speculative. This ensures that applications will always receive correct and fresh data and closes the window of vulnerability entirely. 
    \item \textbf{Write Guarantee:} Hash tree updates are executed asynchronously, but the main thread coordinates with the background thread to ensure liveness of hash tree updates (and state updates). This prevents rollback attacks introduced by asynchrony. The sealed Merkle root always accurately reflects the latest \texttt{FSYNC} call.
\end{enumerate}

\subsection{\tool~Overview}
We introduce a \underline{p}artially \underline{a}synchronous integrity \underline{c}hecking (\tool) method. \tool~exploits asynchronous execution through concurrency: it uses a background thread that assists in the verification and update process. The design centers around a shared queue that resides in secure memory and holds pending update requests. It is driven by two key mechanisms: \textit{synchronous verifications} and \textit{flush-consistent asynchronous updates}. Note that these mechanisms are transparent to applications and handled entirely within the block device driver. An overview is shown in~\autoref{fig:pac} and comparison to prior approaches is shown in~\autoref{tab:properties}.

Synchronous verification means that data is always verified by the main thread before being returned to the caller; these costs are minimized via the hash cache. In contrast, update requests are queued by the main thread and executed asynchronously by the background thread. Updates are executed one at a time. The background thread polls the queue at a configurable rate; we use a thread sleep to enforce this. Flush-consistency means that when the driver receives a device \texttt{flush} call (triggered via an \texttt{FSYNC} system call), the main thread blocks until all pending updates are complete. As device flushes are used to provide durability guarantees to applications (i.e., that data is in fact persistent and not residing in the OS page cache or disk hardware caches), this ensures that the Merkle root always accurately reflects what disk contents the application believes to be durable.

The result is that: (1) reads are never speculative, and (2) the main thread can largely proceed with other useful work (e.g., other file system or application logic) while the background thread processes update requests. Realizing this requires several additional support structures.

\begin{figure}[!t]
    \centering
    \includegraphics[width=0.5\textwidth]{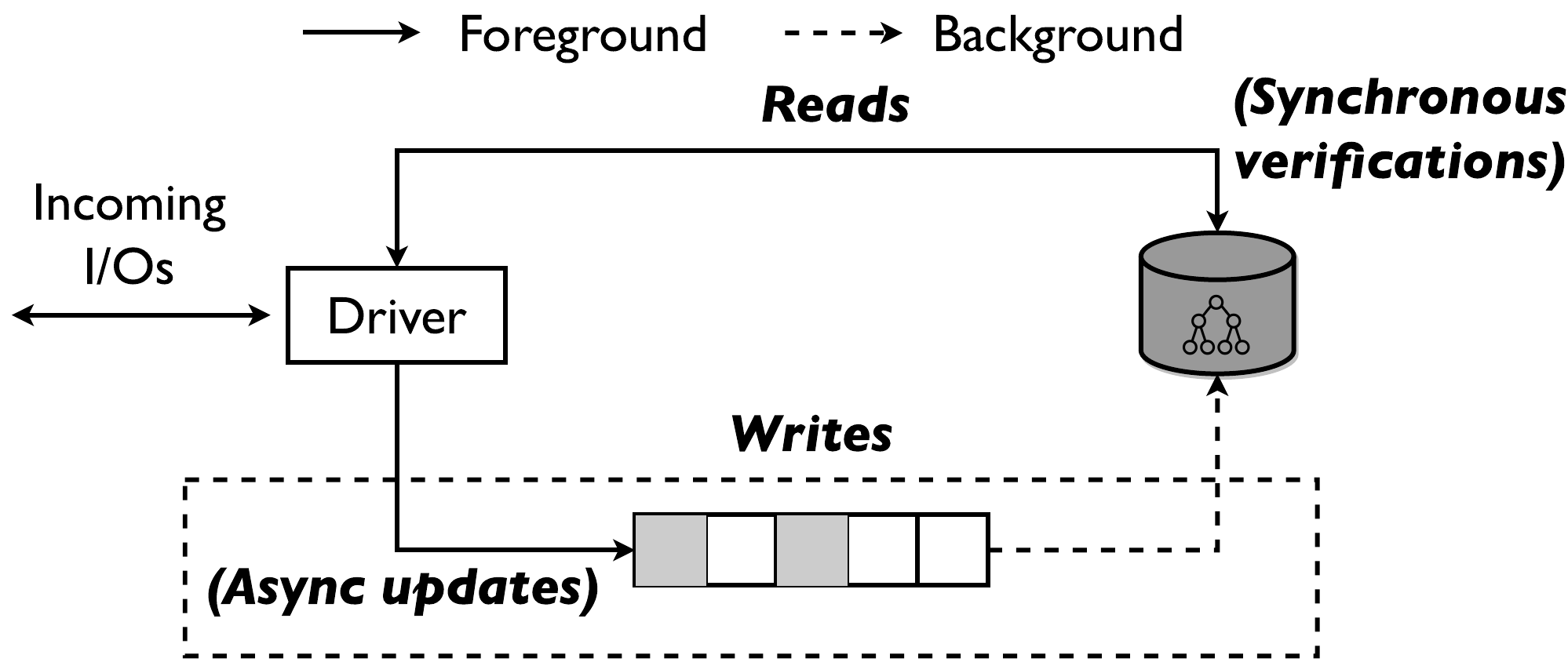}\\
    \caption{\tool~overview. Verifications are executed synchronously by the main thread, while updates are queued by the main thread and executed asynchronously by the background thread. We refer to~\autoref{fig:rwpath} for a timeline diagram of further technical details.}
    \label{fig:pac}
\end{figure}

\subsection{Write Path}

\shortsection{Block Writes}
When the driver receives a \texttt{write} call, the block data is encrypted and a new MAC is generated. This happens in the context of the main thread. A request object is then created and appended to the shared queue. The object contains the Merkle tree node object, new MAC, and new cipher IV. The actual block data is immediately written out to disk (but not flushed through the disk cache, i.e., made durable). This process is shown in~\autoref{fig:rwpath}.

For example, when the driver receives a write for block 10 ($w_{10}$), the main thread writes the data to disk, queues an update request, and returns execution to user space while the background thread processes the update. Later, the background thread will raise an alert if the update failed.

During the write call, \tool~must also handle the scenarios where the incoming write is to a block that still has a pending update, or where the queue is full.

\shortsection{Overriding Updates}
If during the write call there is a pending update request for the same block, the main thread overrides the pending request with the incoming one. Thus, only the latest update to a block is tracked in the queue and executed by the background thread. An alternative would be to queue each update request individually, which would require additional memory for the queue and additional compute resources to complete updates. \tool~avoids the additional memory requirement and doing wasted work on earlier update requests which were effectively voided by more recent updates. 

\shortsection{Rate Limiting Updates}
In a real deployment, the size of the queue will be capped to some percentage of the system memory capacity to more efficiently allocate resources to running applications. To handle a finite queue capacity, \tool~implements a rate limiting mechanism. During a write call, if the queue is already full, the main thread blocks until at least one free queue slot opens up. At the same time, the background thread enters a draining state where it ignores the sleep timer and executes updates as quickly as possible. Thus, when the queue becomes full, performance falls back to (in essence) doing updates synchronously, because the main thread will block while an update completes and opens a queue slot.
% In our implementation, the threshold number of free queue slots to block for is configurable.
The queue size and high/low thresholds that control rate limiting are configurable by a system administrator and can be changed dynamically at runtime. 
% This allows more intelligent scheduling of when update requests are handled to maximize performance.

\subsection{Read Path}

\shortsection{Block Reads}
When the driver receives a \texttt{read} call, the main thread first reads the data from disk. It then checks the queue for the latest version of the block MAC. If there is a pending update request, it verifies that the MAC (in the queued request) is consistent with the fetched data then decrypts the data. Otherwise, it proceeds with the normal read/verify procedure: it fetches the node from the hash cache (or disk), decrypts the data with it, then verifies the MAC in the hash tree. This process is shown in~\autoref{fig:rwpath}.

For example, when the driver receives a read call for block 3 ($r_{3}$), there is no pending update request for block 3 in the queue, so the main thread proceeds with the normal read/verify procedure. However, when the driver later receives a read call for block 7 ($r_{7}$), there is still a pending update request for block 7, so the main thread pulls the MAC directly from the queued request. If the data read from disk is consistent with this MAC, verification is complete and the data is returned to the caller---there is no need to traverse the hash tree. Thus, the queue can also prompt early exits during verification like the hash cache can.

\begin{figure}[!t]
    \centering
    \includegraphics[width=0.45\textwidth]{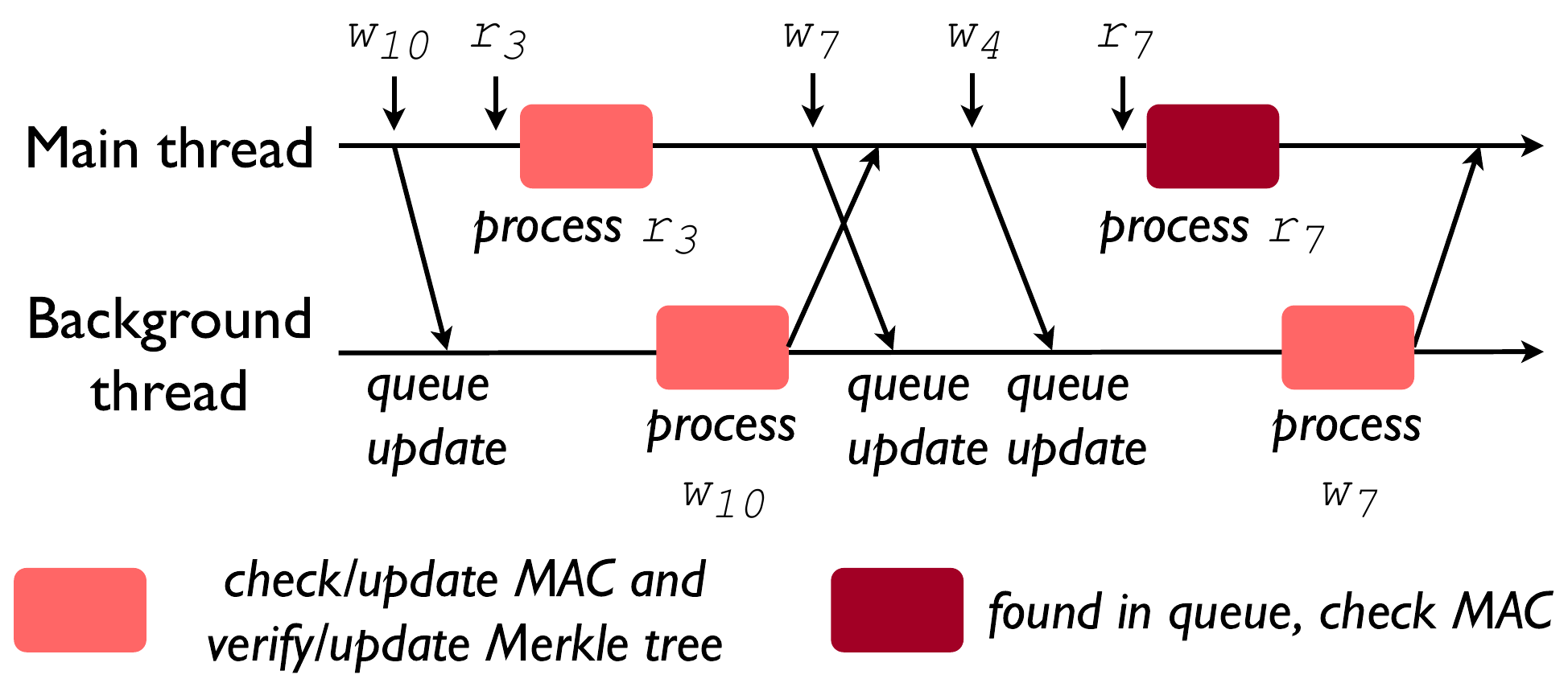}\\
    \caption{In \tool, verifications are executed synchronously to ensure freshness on reads, while updates are executed asynchronously in a background thread to exploit parallelism.}
    \label{fig:rwpath}
\end{figure}

\subsection{Rollback Protection}
While our focus is not rollback attacks, executing updates asynchronously complicates rollback guarantees. This section describes how \tool~mitigates rollback vulnerabilities introduced by using concurrency.

\shortsection{Rollback Challenges}
Standard methods to provide rollback protection and recovery are known. For example, sealing the Merkle root (the disk \textit{state}) with a tamper-resistant counter can prevent an attacker from using a crash to ``reset'' the Merkle root and replay an old version of the entire disk to applications~\cite{parno2011memoir,strackx2016ariadne,angel2023nimble}. However, privileged attackers still \newtxt{have control over vCPU scheduling and other operating system tasks}~\cite{li2019exploiting,schluter2024wesee}. If hash tree updates (and state updates) are handled by a separate thread, the attacker can arbitrarily delay the update thread to control the degree to which state updates are sealed. In the worst case, they can prevent the sealed state from ever advancing past the initial state, and after a crash the system would simply recover to the initial state. The problem is that there must be a \textit{checkpointing} mechanism through which the main thread can validate that hash tree updates are in fact progressing in the background (i.e., liveness guarantee).

\shortsection{Coordinated State Updates}
To design the checkpointing mechanism, we must first consider how to securely manage \textit{state}. For storage, state consists of the set of disk blocks, but is represented by the Merkle root. Rollback protection can be ensured by sealing the state with a tamper-resistant counter. 
However, storage systems are presented an additional challenge with regards to managing state: while in theory the state is a small 32~B Merkle root, updating state is effectively a two-step process: all hash tree updates must first be reflected in the Merkle root, and the final Merkle root must then be sealed. We refer to the former as the \textit{preparation} phase and the latter as the \textit{sealing} phase.

\begin{figure}[!t]
    \centering
    \includegraphics[width=0.45\textwidth]{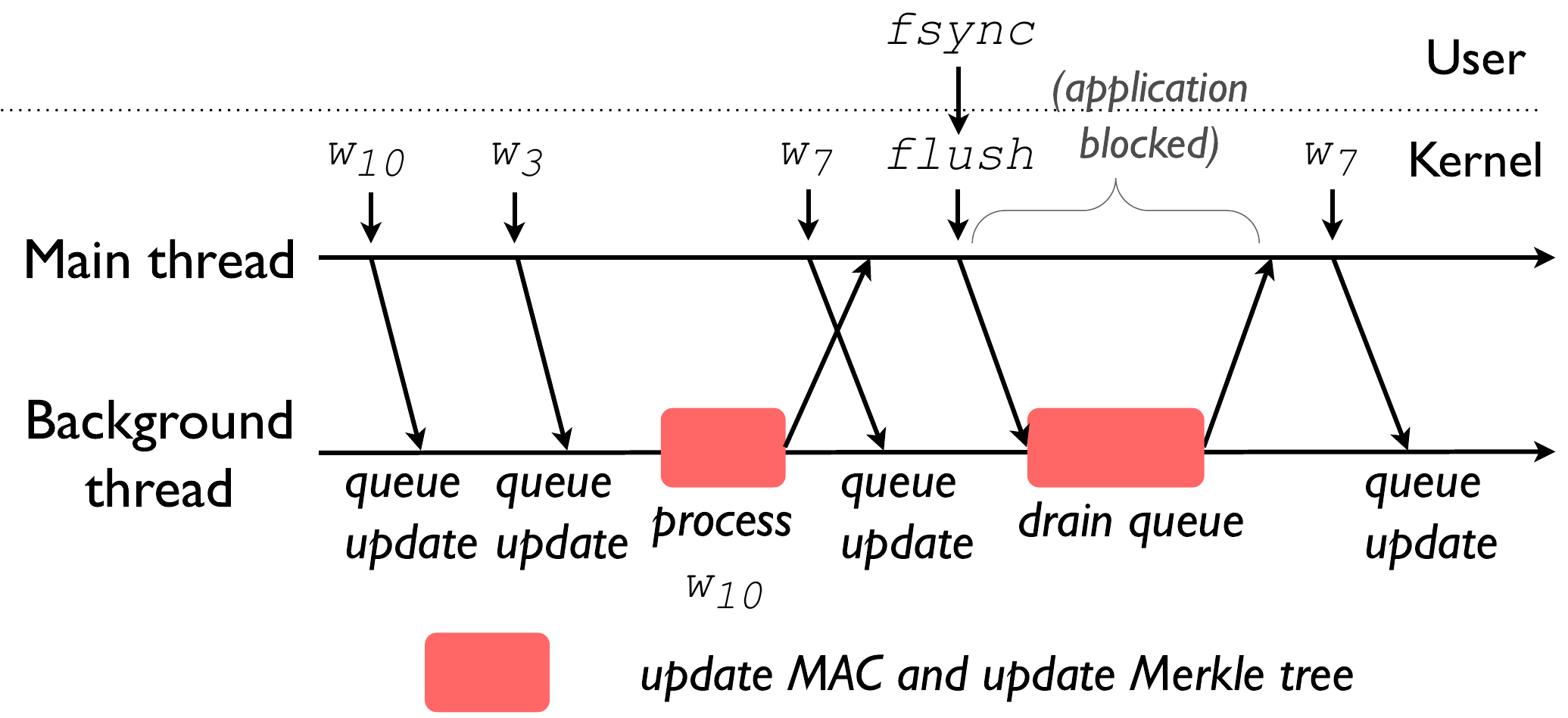}\\
    \caption{In \tool, device flush commands (triggered by an \texttt{FSYNC} system call) force the main thread to block while all pending update requests complete and a state update is committed (i.e., Merkle root is sealed). This ensures liveness of the background thread and rollback protection.}
    \label{fig:fsyncpath}
\end{figure}

In \tool, both threads coordinate in performing state updates. The background thread handles preparation. Sealing is invoked at the discretion of the application and can be handled in either the main thread or background thread. However, the main thread checkpoints by blocking in the device flush function until both preparation and sealing are complete. This is shown in~\autoref{fig:fsyncpath}. \newtxt{Note that without blocking, the main thread would have no guarantee that the Merkle root was updated and sealed properly. Thus, the checkpoint mechanism guarantees both liveness and rollback protection: any subsequent integrity check against the Merkle root must either be fresh or fail.}

Checkpointing during a flush (triggered by an \texttt{FSYNC} system call) is desirable because it is a natural synchronization point upon which crash-consistency protocols are built; it signifies a clear intent to commit a state update. From an application point of view, data (whether buffered in memory or written out to disk) is not considered persistent until an fsync is called and a flush command is sent to the device. Upon a crash, updates queued since the last flush would \newtxt{indeed be lost, but alongside the dirty (buffered) data as well. We note that ideally applications should flush data in a timely fashion. However, reducing (expensive) flushes is common practice when tuning application configurations. As an additional measure of protection, flushes could be opportunistically (or eagerly) induced in the driver.}

\section{Security Analysis}
\label{sec:sec_analysis}

In this section, we show that \tool~provides strong integrity guarantees (e.g., prevent replay and rollback attacks) during normal operation and following system crashes.

\begin{theorem}
     In the absence of system crashes, \tool~always returns the most recent version of data to the caller on reads or the integrity check fails. ({\it Read guarantee})
\end{theorem}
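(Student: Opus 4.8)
The plan is to reduce the read guarantee to a single \emph{queue invariant} about the relationship between the shared update queue, the hash tree, and the trusted Merkle root, and then discharge the theorem by a case analysis on the two branches of the read path. For a block $b$ and a point in the execution, let $W(b)$ denote the data written by the most recently \emph{acknowledged} write to $b$ (a write becoming acknowledged exactly when its update request has been installed in the shared queue), or the initial contents of $b$ if there is no such write. The invariant I would maintain is: outside of a mid-flight operation, exactly one of the following holds --- (i) the queue contains a single pending request for $b$ whose MAC and IV are those produced by $W(b)$; or (ii) the queue contains no request for $b$, the hash-tree leaf associated with $b$ holds the MAC and IV produced by $W(b)$, and that leaf is consistent with the Merkle root currently held in trusted memory.

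First I would establish the invariant by induction over the sequence of driver steps. The base case is the initial state: empty queue, tree and root reflecting the initial disk image, so case (ii) holds for every $b$. For the inductive step I would argue per operation type. A block write to $b$ writes the fresh ciphertext to disk and either overrides $b$'s existing queue entry or appends a new one, restoring case (i) for $b$ and leaving other blocks untouched; the queue-full path only stalls the main thread while the background thread drains, which does not affect the invariant. A background update for $b$ removes $b$'s queue entry after folding its MAC/IV into the leaf and recomputing the authentication path to the root, restoring case (ii) --- unless a newer write for $b$ was acknowledged in the interim, in which case the override mechanism has left a fresh entry and case (i) still holds. A read mutates no state. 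Finally, an adversarial action on the storage medium touches neither the queue nor the trusted Merkle root, so it cannot falsify the invariant; it can only desynchronize the on-disk bytes, which is precisely what the read-time check is designed to detect.

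Next I would use the invariant to prove the theorem. Fix a read $r$ for block $b$ that returns data $d$ with a passing integrity check, and take the main thread's queue-lookup step inside $r$ as the linearization point, so $W(b)$ is well defined. If the lookup finds a pending request (case (i)), its MAC/IV are those of $W(b)$; the main thread checks $d$ against that MAC under the secret key, so a passing check implies, by EUF-CMA security of the AES-GCM tag (the key resides in trusted memory), that $d$ is exactly the ciphertext written by $W(b)$, and decryption yields $W(b)$'s plaintext. If the lookup finds nothing (case (ii)), the leaf MAC the main thread uses is either read from the previously authenticated hash cache or fetched from untrusted disk and re-verified along the authentication path; a passing check forces the recomputed root to equal the trusted root, so by collision resistance of SHA-256 the leaf MAC used equals the authentic leaf MAC, which by the invariant is $W(b)$'s MAC, and the same unforgeability argument then shows $d$ is $W(b)$'s ciphertext. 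In both branches a passing check implies $d = W(b)$; contrapositively, if $d \neq W(b)$ the integrity check fails --- which is exactly the read guarantee.

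I expect the main obstacle to be the inductive step for concurrent operations: pinning down the interleaving of a main-thread overriding write with a background-thread update (and its queue removal) on the same block, and arguing from the synchronization discipline on the shared queue (which lives in trusted memory) that the background thread never drops an entry whose write has not been reflected either back into the queue or into the tree. A secondary subtlety is making ``the most recent acknowledged write at the time data is returned'' precise when several application threads issue I/O concurrently; I would handle this by fixing the queue-lookup step as the read's linearization point and observing that no write to $b$ can become acknowledged strictly between that step and the completion of the MAC check within the same main-thread invocation.
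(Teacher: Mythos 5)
Your proposal is correct and follows essentially the same route as the paper's proof: the same two-case analysis of the read path (pending queue entry vs.\ tree/root verification), the same central claim that the freshest MAC always resides either in the queue or in the sealed Merkle tree, and the same reduction to hash collision resistance and MAC unforgeability under the protected root/key. The only difference is one of rigor---you make explicit, as an inductive queue invariant with a linearization point, what the paper simply asserts (``the most recent version of the data is always reflected in the Merkle tree or in the queue''), which strengthens rather than changes the argument.
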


\begin{proof}
Let $R_i$ denote the Merkle root at version $i$. Without loss of generality, consider the block at address $j$. We denote with $B$ and $h(B)$ the current data stored in the block at address $j$ and its hash, respectively. Now, assume that a write operation updated the block at address $j$ to the value $B'$.  We define the most \textit{recent} data ($B'$ here) to be the version that reflects the latest write issued to the driver (but not necessarily flushed through disk caches, i.e., durable).

When the block was updated, the data is sent to be written out to disk and an update request was placed on the queue. Upon the read request, one of two scenarios will occur. Either the update will have completed, in which case there will not be a pending update in the queue and the new block hash $h(B')$ would be reflected in a new Merkle root $R_{i+1}$. In this case, the driver will read the updated block $B'$ from disk, fetch $h(B')$ from disk (or the hash cache), proceed with the normal recursive verification process, and return $B'$ if it succeeds or an integrity check fault if not. Otherwise, if the update had not completed, then there will be a pending update request in the queue and the driver will read the updated block $B'$ from disk, check the consistency of the block data with the queued hash $h(B')$, then return $B'$ if it succeeds or an integrity check fault if it not.

The most recent version of the data $B'$ is always reflected in the Merkle tree or in the queue.
Thus, for an attacker to successfully replay data they would have to present a value $B^*$ and hash $h(B^*)$ that either passes the verification check against the new Merkle root $R_{i+1}$ or the queued hash $h(B')$. Because of the collision resistance property of the cryptographic hash function used, the adversary is unable to find $B^*$ such that, $h(B^*) = h(B')$. Further, hardware-based memory access controls (recall our trust model in \autoref{sec:sec_model}) ensure that the Merkle root is sealed and that it can not be modified by the attacker to facilitate the attack. Thus, the adversary could not provide a block $B^*$ that would be verified against stored hash values without triggering an integrity check fault.
\end{proof}

\begin{theorem}
    After a system crash, \tool~always recovers the most recent state perceived as durable by the application or the integrity check fails.  ({\it Write guarantee})
\end{theorem}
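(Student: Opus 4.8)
The plan is to mirror the structure of the read-guarantee proof, but now reasoning about the \emph{sealed} Merkle root rather than a single block. First I would pin down what ``the most recent state perceived as durable by the application'' means: it is the disk state $S_k$ consisting of exactly those block writes that were acknowledged to the application before the $k$-th (most recent) \FSYNC{} call returned---the same state that conventional crash-consistent write semantics guarantee in a benign setting. Writes issued after that \FSYNC{} (like dirty page-cache data in the benign case) are legitimately permitted to be lost on a crash, so they are excluded from the target state.

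Next I would establish the central invariant: whenever an \FSYNC{} call returns, the sealed, non-volatile Merkle root reflects $S_k$ exactly. This follows from the flush-consistency mechanism of \autoref{fig:fsyncpath}: the main thread blocks in the device-flush handler until (i) the background thread has drained every pending update request from the shared queue, so the in-memory Merkle root has absorbed the hashes of all acknowledged writes (the \emph{preparation} phase), and (ii) that root has been bound to the tamper-resistant counter, signed, and written to non-volatile storage (the \emph{sealing} phase). Because the queue always holds the latest MAC for each block (override semantics), no acknowledged write is missing from the prepared root; and because the block data itself was pushed to disk and is forced through the device cache by the same flush, both the data of $S_k$ and a root that authenticates it are durable before the call returns. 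If a crash occurs before sealing completes, then that \FSYNC{} never returned, so the most recent durable state is $S_{k-1}$ and the previously sealed root $R_{k-1}$ is the relevant target, preserving the statement.

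I would then argue the recovery step by reduction to rollback protection and collision resistance. On restart the driver loads the sealed root from untrusted storage and checks it against the tamper-resistant counter and signature. An attacker who replays an older sealed root $R_{k'}$ with $k' < k$ fails the counter check; one who fabricates a root fails the signature check; so the only root surviving verification is $R_k$, the root of $S_k$. Reading any block back then reduces to the read-guarantee argument: the block's hash lies on an authentication path to $R_k$, and by collision resistance of the hash function the attacker cannot substitute different data $B^*$ with $h(B^*)$ that verifies, so either the genuine contents of $S_k$ are returned or the integrity check faults.

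The main obstacle is the liveness argument against the adversarial scheduler. Since the threat model (\autoref{sec:sec_model}) lets the attacker arbitrarily suspend or delay the background update/sealing thread, I must rule out the attack in which the sealed root never advances past the initial state, so that a crash silently rolls the whole disk back. The point to make rigorous is that the checkpoint forecloses exactly this: the main thread does not return from \FSYNC{} until it has \emph{observed} both preparation and sealing complete, so a delayed background thread cannot cause \FSYNC{} to return with a stale sealed root---it can only postpone the return of \FSYNC{} itself, which is a denial-of-service effect and explicitly out of scope. Hence under any adversarial schedule the invariant ``a returned \FSYNC{} implies $R_k$ is sealed and durable'' is preserved, and combined with the recovery and collision-resistance arguments above, the write guarantee follows.
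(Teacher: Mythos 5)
Your proposal is correct and follows essentially the same route as the paper's proof: the sealed root $R_k$ is exactly the root committed at the last returned \FSYNC{}, and collision resistance of the hash function prevents any state $S_{k^*} \ne S_k$ from verifying against it, so recovery either yields $S_k$ or faults. The paper's own proof is considerably terser---it simply posits that the most recent sealed root is returned at recovery and invokes collision resistance---whereas your flush-consistency invariant (\FSYNC{} returns only after preparation and sealing complete), the counter-based rejection of stale roots, and the liveness argument against the adversarial scheduler make explicit supporting steps that the paper delegates to its design description of checkpointing and coordinated state updates in \autoref{sec:pac} rather than to the proof itself.
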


\begin{proof}
Let $S_i$ denote the disk state (set of all block data) at version $i$ and $R_i$ denote the corresponding Merkle root that is sealed during a \texttt{FSYNC} call to commit (make durable) this disk state $S_i$. Assume a crash occurred, where the most recent sealed root $R_{k}$ is returned to the system during recovery. An attacker would not be able to produce a set of data $S_{k*} \ne S_{k}$ that would hash to the protected and sealed root $R_{k}$ because of collision resistance of the hash function. Thus, the recovery process will always return data of the most recent state made durable or the integrity verification process will fail.
\end{proof}

\section{Performance Evaluation}
\label{sec:eval}

In this section we evaluate the performance of~\tool{} across several realistic workloads. We compare \tool{} with state-of-the-art hash trees DMTs~\cite{burke2025scalable}~\footnote{\newtxt{We note that \tool{} is a general approach and the underlying data structure can be swapped out for alternatives (e.g., list-based ones).}} and report results against two non-integrity verified baselines: No encryption/no integrity and AEAD (Encryption/no integrity). Our evaluation focuses on the following research questions:

\vspace{6pt}

\begin{rqlist}
\itembase{3pt}
    \item \label{rq:fsync} What overheads does \tool{} incur and under what conditions?
    \item \label{rq:size} Is \tool~scalable (i.e., broadly maintains a stable performance guarantee)?
    \item \label{rq:storage} What memory and storage costs are associated with \tool?
\end{rqlist}

\subsection{Experimental Setup}

\shortsection{Implementation}
We implement the hash trees in 5k lines of C++. We use BDUS to implement a custom block device driver that wraps a lower-level driver~\cite{faria_bdus_2021}. BDUS has a kernel module that exposes block layer hooks to user space~\footnote{\newtxt{We note that other frameworks could be used to implement \tool{}, including pure kernel, pure user space, or hybrid approaches. There are inherent trade-offs to each (e.g., less isolation and larger attack surface in kernel space); this is an orthogonal issue that is an active area of research~\cite{jacobs2024system,yasukata2023zpoline}.}}. The three functions of interest are \texttt{read()}, \texttt{write()}, and \texttt{flush()} which are invoked by the kernel whenever a block is read, a block is written, or an \FSYNC{} is called by the application. Our basic data unit aligns with the disk I/O size (4~KB blocks)~\cite{khati2017full,brovz2018practical,nist-aes-xts}. Note that best-known methods have shown that sealing can be done within single-digit milliseconds for small state sizes ($<1$~KB)~\cite{niu2022narrator,angel2023nimble}. We simulate a state update by putting the main thread to sleep for 5~$ms$ at the end of the device flush function. A parameter overview is shown in~\autoref{tab:parameters}.

Like prior works, we ensure authenticated encryption with AES-GCM~\cite{avanzi2022cryptographic,taassori2018vault}. We use a 128-bit encryption key for blocks. The MACs produced during encryption are used as the leaves in the tree. For internal nodes, we compute 256-bit hashes using SHA-256 with a 256-bit key.

\shortsection{Testbed}
We perform all experiments on a cloud server equipped with a 48-core 2.8~GHz AMD EPYC 7402P processor, 128~GB memory, and \newtxt{two local NVMe SSDs (one for data, one for metadata). Note that where metadata resided did not significantly impact results; small hash caches tend to be very efficient.} 
We reinitialize hash trees between each experiment and use a standard LRU cache replacement policy. For DMTs and consistent with prior evaluations, we set the splay window flag $w=True$ and splay probability $p=0.01$, which means that splaying only occurs 1\% of the time. For \tool, we set the queue size to 1024 and the low watermark threshold for rate limiting to 0.75.

\begin{table}[t]
    \centering
    \small
    \caption{Key experimental parameters.}
    \begin{tabularx}{\linewidth}{Xl}
        \toprule
        \textbf{Parameter} & \textbf{Description} \\
        \midrule
        \textit{Capacity} & Usable capacity for data blocks \\
        \textit{Cache size} & Cache size as \% of tree size \\
        \textit{Read ratio} & \% of read operations \\
        \textit{I/O size} & Size of application I/O \\
        \textit{I/O depth} & Max no. outstanding application I/Os \\
        \textit{Thread count} & Number of application threads \\
        \textit{Background rate} & Polling rate of the background thread \\
        \textit{\FSYNC{} period} & Duration between application flush calls \\
        \bottomrule
    \end{tabularx}
    \label{tab:parameters}
\end{table}

\shortsection{Workload Settings}
We adopt (and extend) workloads and evaluation strategies similar to those used in prior work on storage integrity. Here we generate workloads with \textit{fio}~\cite{fio}. Workloads have a 5-minute warm-up period and 15-minute benchmark period. Unless stated otherwise, mean latency or throughput is reported in each graph.

When selecting workloads, we must consider how \tool~would be deployed in a real system. Exactly when the read, write, and flush functions are called depends on how the application interacts with the disk. It may read and write directly to the raw device file---e.g., as the PostgreSQL database does. In contrast, it may interact with the disk through a file system, which is what many applications use to store unstructured data. In either case, the application might bypass the OS page cache (by opening the regular file or device file with the \texttt{O\_DIRECT} flag) or rely on the page cache for buffered I/O. In the latter case, the sizes and timing of block I/Os issued to the driver depend on how often the application explicitly calls \FSYNC{} (to write out dirty pages), or if memory pressure causes the kernel to begin writing dirty pages back to disk. 

We examine the performance space under these different scenarios by running experiments directly against the storage device and varying key system and workload parameters (capacity, cache size, etc.). \newtxt{This allows us to provide generalizable insights on PAC performance and offer guidance on steering applications towards optimal configurations (e.g., tuning flush frequency).} We toggle the \texttt{O\_DIRECT} flag in \textit{fio} to do this, per best practice~\cite{google_pd_benchmarking}. We focus especially on write-heavy workloads with a Zipfian shape, which closely approximate real-world storage access patterns~\cite{li2023depth,yang2016write}. 

Unless stated otherwise, default parameters include---Read ratio: 1\%, I/O size: 32~KB, Thread count: 1, I/O depth: 32, Capacity: 1~TB, Cache size: 10\%, Target background rate: 1000 requests/s, \FSYNC{} period: 1000 updates. These parameters showcase the best performing configuration for the baselines and reflect the shape and behavior of real-world storage workload patterns~\cite{li2023depth,yang2016write}.

\subsection{Results}
\subsubsection{\tool{} Performance}
\label{sec:parallelism}

\tool{}'s performance is affected both by its configuration parameters (key among these being target background processing rate) and the workload's tolerance of uncommitted writes (represented in our workloads as the \FSYNC{} period). We briefly evaluate and optimize system parameters, followed by a performance evaluation of \tool{} on diverse workloads.

\begin{figure}[t]
    \centering
    \includegraphics[width=\linewidth]{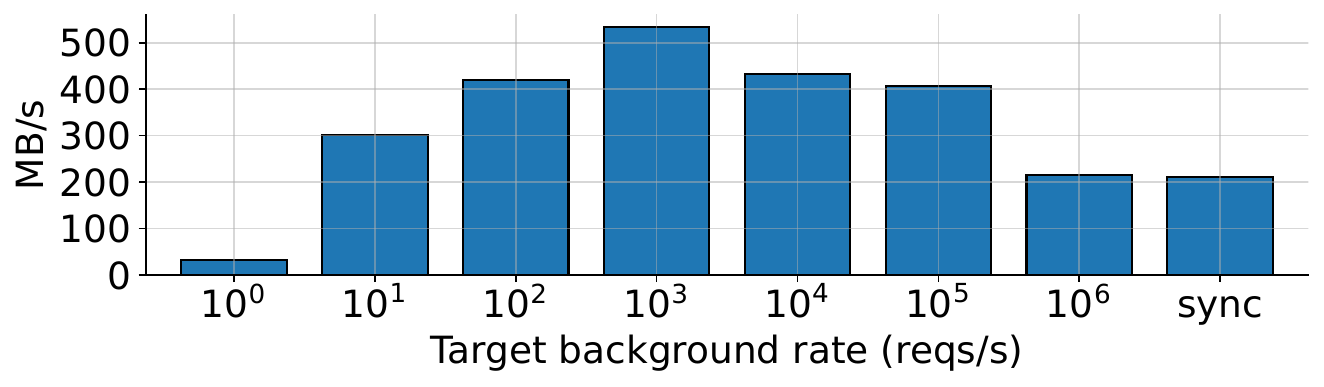}
    \caption{Aggregate read/write throughput vs. target background processing rate. ``sync'' denotes synchronous DMT, and the remaining data points denote DMT+\tool~at various rates. There is a region of optimal performance between 100 and 100k~requests/s, with the maximum throughput observed at 1k~requests/s.}
    \label{fig:tp_vs_bgrate}
\end{figure}

\shortsection{Impact of Target Background Processing Rate}
The target background processing rate describes the target rate that the background thread polls the queue and executes updates. \autoref{fig:tp_vs_bgrate} shows that maximum throughput is achieved at 1k~requests/s, with a $2.3\times$ speedup over synchronous DMT (the bar labelled ``sync''). At a rate of 1M~requests/s, the throughput is approximately equal to synchronous DMT.

The target rate that the background thread polls the queue influences the degree of rate limiting and update overriding. If it polls too slowly (i.e., the polling rate is less than the workload IOPS), the queue quickly accumulates update requests and becomes full. This leads to a significant amount of rate limiting to free up queue slots. Moreover, the background thread remains largely idle when it could have been polling faster. In contrast, if it polls too quickly, updates tend to be processed very quickly, which reduces the opportunity to override updates. If the processing of certain updates had been delayed for slightly longer, this wasted work could have been eliminated. 

This manifests in \autoref{fig:tp_vs_bgrate} as a region of optimal performance between 100 and 100k~requests/s. To understand why, consider that synchronous DMT has a throughput of approximately 200~MB/s, which amounts to approximately 6100 read/write IOPS for 32~KB I/Os. This workload is 99\% writes, so 6k of those I/Os were writes (which prompted an update request to be queued). A target rate of 1k~requests/s balances this tradeoff best: it is sufficiently fast to prevent the queue from becoming full too quickly, but still gives the background thread sufficient time to exploit reference locality by overriding updates.

\shortsection{Impact of \FSYNC{} Period}
We fix the target background processing rate at 1k requests/s and next examine how the \FSYNC{} period influences performance. The \FSYNC{} period describes the duration (in number of write I/Os) between \FSYNC{} calls. It is important because two critical events happen during device flushes---the queue drain and Merkle root sealing. Analyzing how throughput changes w.r.t. the \FSYNC{} period allows us to characterize to what degree parallelism helps when applications issue \FSYNC{} calls more or less frequently.

\begin{figure}[t]
    \centering
    \includegraphics[width=\linewidth]{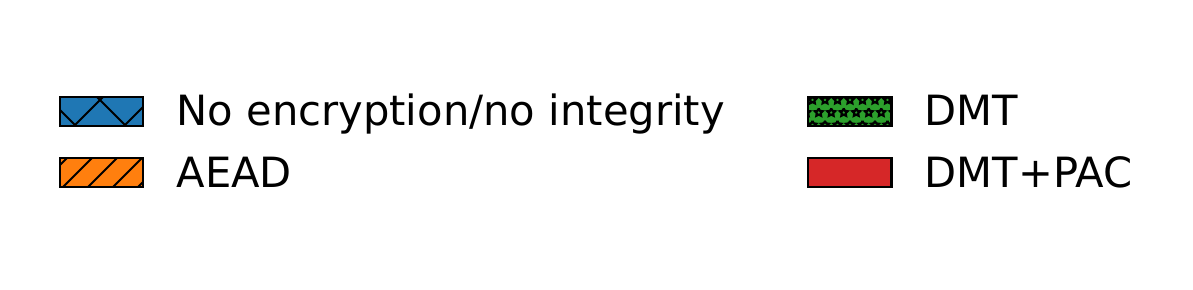}
        \includegraphics[width=\linewidth]{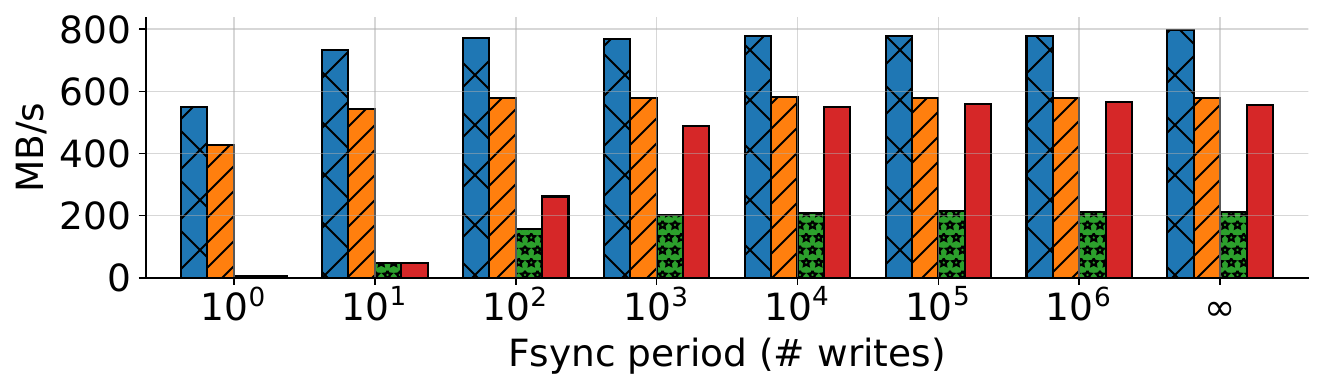}
    \caption{Aggregate read/write throughput vs. \FSYNC{} period. Longer periods between flush calls enable \tool~to better exploit parallelism.}
    \label{fig:tp_vs_fsync}
\end{figure}

\autoref{fig:tp_vs_fsync} shows performance of each integrity technique as workload \FSYNC{} period varies. \tool{} outperforms synchronous DMTs when \FSYNC{}s occur every 100 writes or more, with performance $>85\%$ of AEAD when \FSYNC{}s occur every 1000 writes or more. In effect, when \FSYNC{}s are sufficiently infrequent, \tool{} virtually eliminates the overhead of integrity checking. When the \FSYNC{} period is $\leq 1000$ updates, flushes are more frequent so the background thread executes most updates while draining the queue during the flush. Thus, there is little opportunity for the background thread to amortize update costs. 

These observations imply that the background thread effectively races the main thread to amortize update costs. If the driver is kept sufficiently busy (i.e., $\geq 1000$ update requests) between flushes and flushes are infrequent, then update (and sealing) costs can largely be amortized. 
However, if the driver is starved of I/O requests (e.g., because they are still sitting in the page cache) or flushes are frequent, \tool~falls back to (in essence) doing updates synchronously. 

\newtxt{It is worth noting that prior works have tackled this problem (in the synchronous case) with \textit{batched state updates}~\cite{niu2022narrator,angel2023nimble}, which artificially increases the \FSYNC{} period, but weakens security guarantees. We argue that this is not only insecure, but unnecessary. Rather than batching state updates, we can reframe the \FSYNC{} period as a principled tuning knob for applications---maintaining strong security guarantees and eliminating overheads by instead restructuring the way in which applications issue dirty page writebacks and flushes~\cite{larabel2024rwf_uncached,qian2024combining}. Tools like auto-tuners are increasingly being used to help improve application performance by finding optimal application configurations (i.e., optimal write-back and flush behaviors)~\cite{freischuetz2025tuna}. These tools therefore can (and should) be used to guide applications towards the optimal region---which we prescribe is at an \FSYNC{} period $>1000$.}

\begin{tcolorbox}[colframe=white]
    \textbf{Takeaway:} When properly configured, \tool{} delivers 85\% of the throughput of the AEAD baseline (\textbf{\ref{rq:fsync}}). This shows that it is possible to provide strong storage integrity guarantees with minimal overhead.
\end{tcolorbox}

\subsubsection{System Scalability}
Our above analysis showed that if flushes are carefully managed, \tool~can deliver up to a 2.8$\times$ speedup over DMTs and achieve near-zero overhead. Next we examine the scalability of \tool---i.e., whether it can deliver a stable performance guarantee across other system and workload settings that characterize storage deployments.
This will elicit how viable \tool~is as a general solution.

\begin{figure}[t]
    \centering
    \includegraphics[width=\linewidth]{figures/legend.pdf}
    \includegraphics[width=\linewidth]{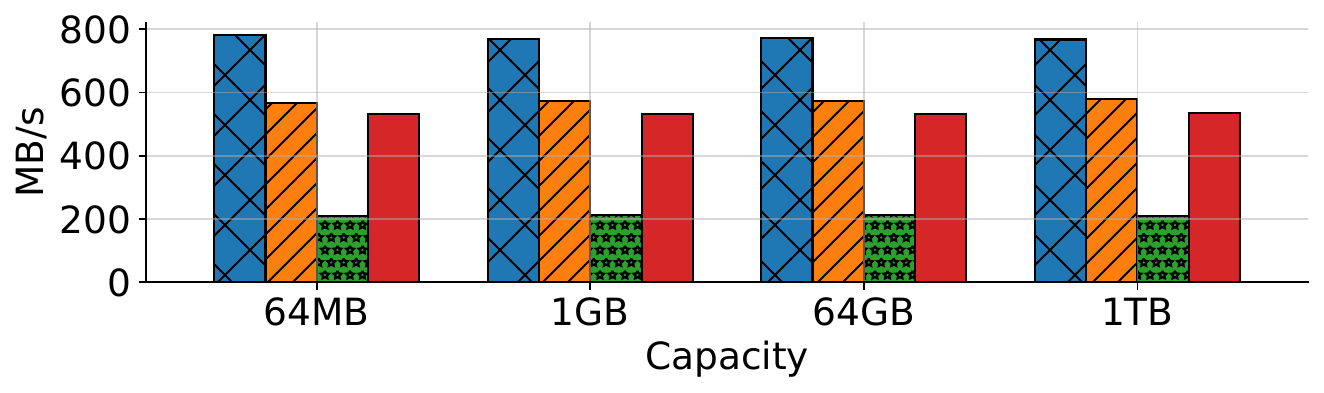}
    \caption{Aggregate read/write throughput vs. capacity. Larger capacities have larger tree heights and thus higher hashing costs. \tool~still delivers near-baseline performance.}
    \label{fig:tp_vs_cap}
\end{figure}

\begin{figure}[t]
    \centering
    \includegraphics[width=\linewidth]{figures/legend.pdf}
    \begin{subfigure}{\linewidth}
        \centering
        \includegraphics[width=\linewidth]{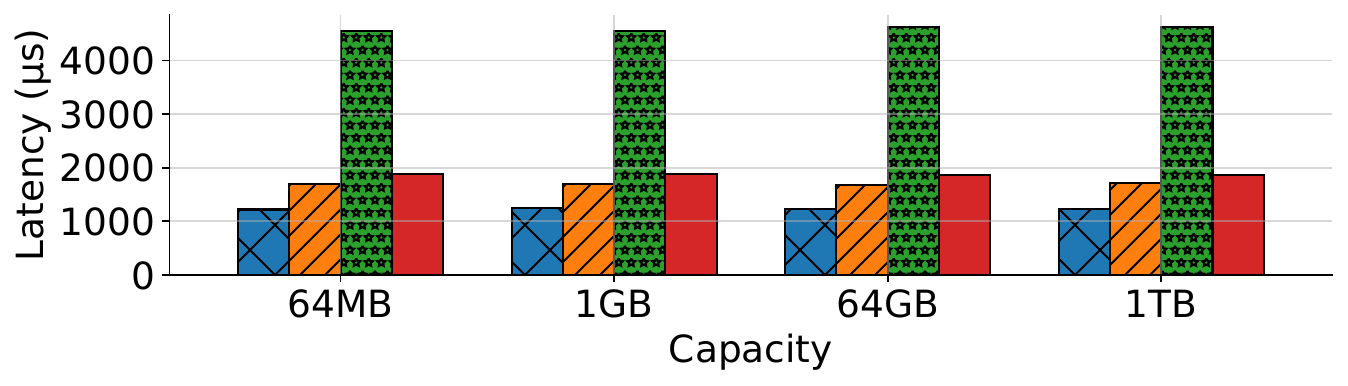}
        \caption{P50 write latency vs. capacity.}
        \label{fig:agg_bw_read_ratio_p50}
    \end{subfigure}
    \hfill
    \begin{subfigure}{\linewidth}
        \centering
        \includegraphics[width=\linewidth]{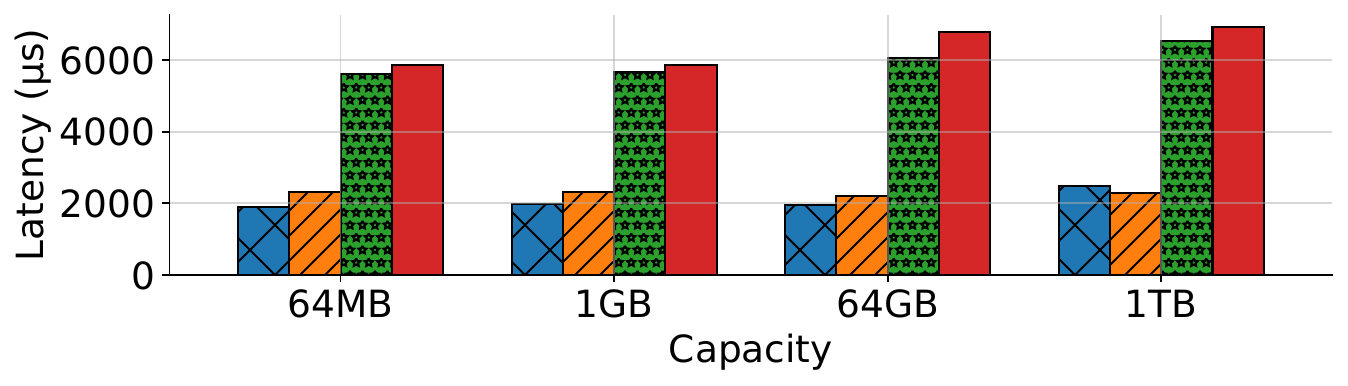}
        \caption{P99.9 write latency vs. capacity.}
        \label{fig:agg_bw_read_ratio_p999}
    \end{subfigure}
    \caption{\tool~median and tail latency improvements reflect throughput improvements.}
    \label{fig:lat_vs_cap}
\end{figure}

\shortsection{Scaling with Capacity}
\autoref{fig:tp_vs_cap} shows how disk capacity impacts performance. Disk capacity affects the size of the tree---notably, the height. Update and verification costs increase at larger capacities because the tree depth increases. We observe across all capacities that aggregate read/write throughput is approximately 200~MB/s for DMTs and 500~MB/s for \tool. This amounts to a 2.5$\times$ speedup and near-baseline performance. Note that DMTs were previously shown to deliver a stable performance guarantee w.r.t. capacity. This shows that \tool~can maintain a stable performance guarantee. Latency improvements reflect these observations. \autoref{fig:lat_vs_cap} shows that median latency has a 2.5$\times$ improvement across all capacities.

\begin{figure}[t]
    \centering
    \includegraphics[width=\linewidth]{figures/legend.pdf}
    \includegraphics[width=\linewidth]{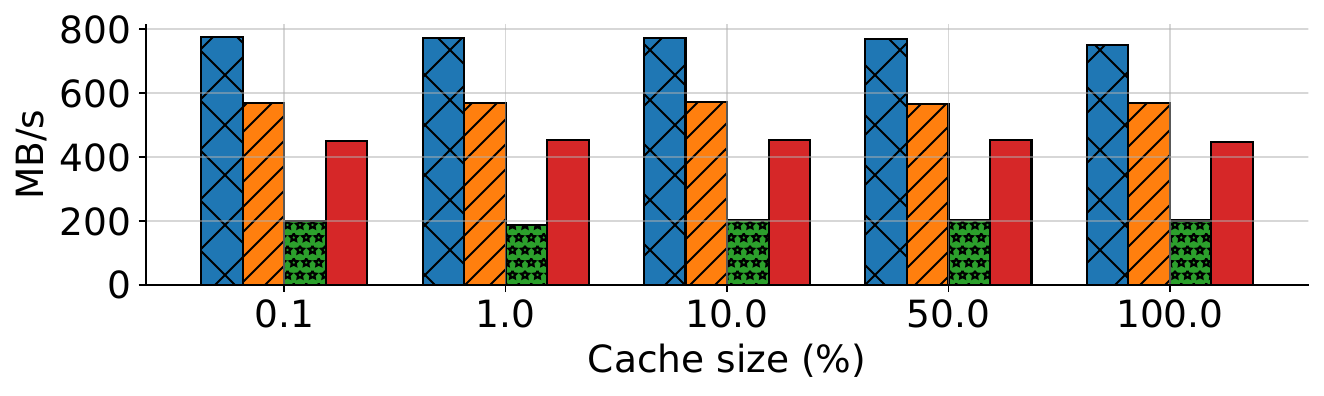}
    \caption{Aggregate read/write throughput vs. cache size. Larger caches help reduce metadata I/O costs.}
    \label{fig:tp_vs_cache}
\end{figure}

\begin{figure}[!t]
    \centering
    \includegraphics[width=\linewidth]{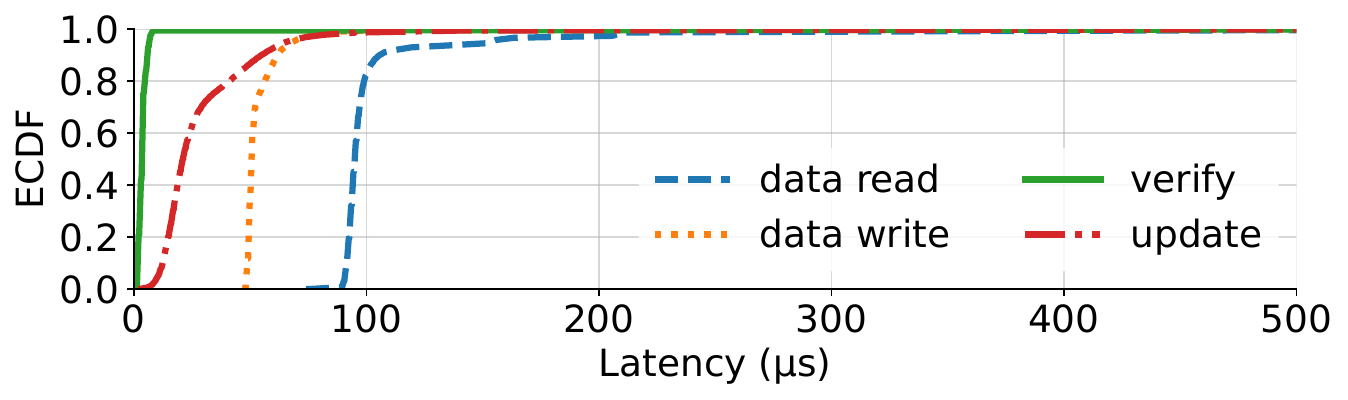}\\
    \caption{Distribution of verification and update latencies. Verification costs are negligible. Thus, deferred verification~\cite{arasu2021fastver,arasu2017concerto} is not only insecure, but it is also unnecessary.}
    \label{fig:verifylat}
\end{figure}

\shortsection{Impact of Cache Size}
Next we examine how the hash cache affects performance. The hash cache plays a central role in minimizing metadata I/O and prompting early exits during verifications. The goal here is to understand whether the limitations of DMTs described in~\autoref{sec:motivation} cannot be solved by simply using a larger cache. 

\autoref{fig:tp_vs_cache} shows that at a cache size of 0.1\%, \tool~delivers a 2.5$\times$ speedup over DMTs and near-baseline performance. In fact, it maintains that speedup to a cache size of 100\%---small caches (e.g., 0.1\%) are very efficient, and performance gains are negligible beyond that.
This cache efficiency is also reflected in the per-block verification and update latencies; for 32~KB I/Os this means that 8 verifications or updates are executed per read/write. \autoref{fig:verifylat} shows the latency distribution at a cache size of 10\%. While updates benefit from caching to an extent, the key insight is that cache hits prompt early exits on nearly all verification operations, making verification latencies largely negligible. We observe that verifications take approximately 2~$\mu$s per block on average, compared to updates which take a median of 25~$\mu$s per block. This is significant because prior works~\cite{arasu2021fastver,arasu2017concerto} have relied on deferring verifications to offset costs. This shows that not only is deferred verification insecure, but it is also unnecessary.

Note that for 72~B nodes (in DMT) and a 1~TB disk, a cache size of 0.1\% amounts to 38.6~MB of memory. Examining cache sizes smaller than this is moot; the effective price for DRAM at these capacities is the same. Thus, the limitations of DMTs cannot be solved by simply using a larger hash cache: efficiently executing updates requires more judicious algorithm design. \tool~provides a promising direction towards this.

\begin{figure}[t]
    \centering
    \includegraphics[width=\linewidth]{figures/legend.pdf}
    \includegraphics[width=\linewidth]{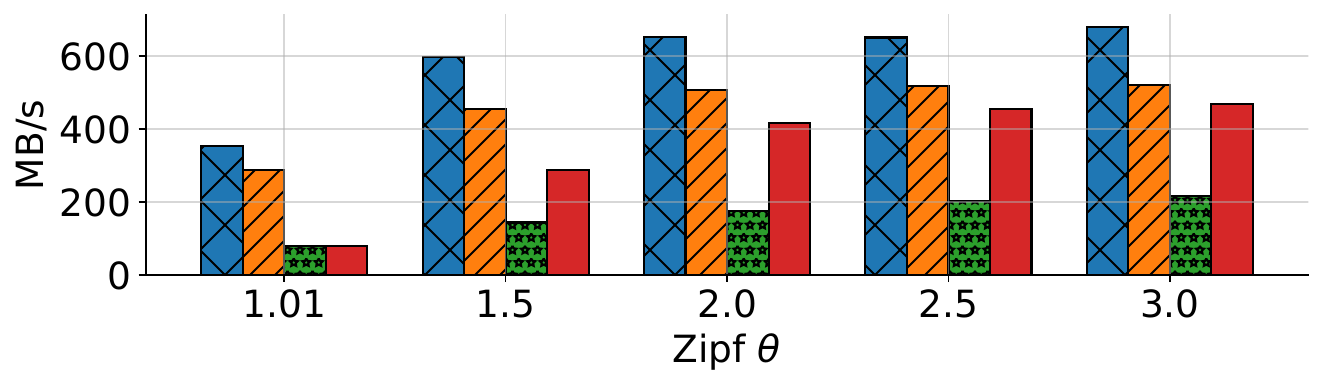}
    \caption{Aggregate read/write throughput vs. workload skewness. More skewed workloads enable \tool~to override updates and better amortize update costs.}
    \label{fig:tp_vs_skew}
\end{figure}

\begin{figure*}[t]
    \centering
    \includegraphics[width=0.5\linewidth]{figures/legend.pdf}\\
    \begin{subfigure}{0.49\linewidth}
        \centering
        \includegraphics[width=\linewidth]{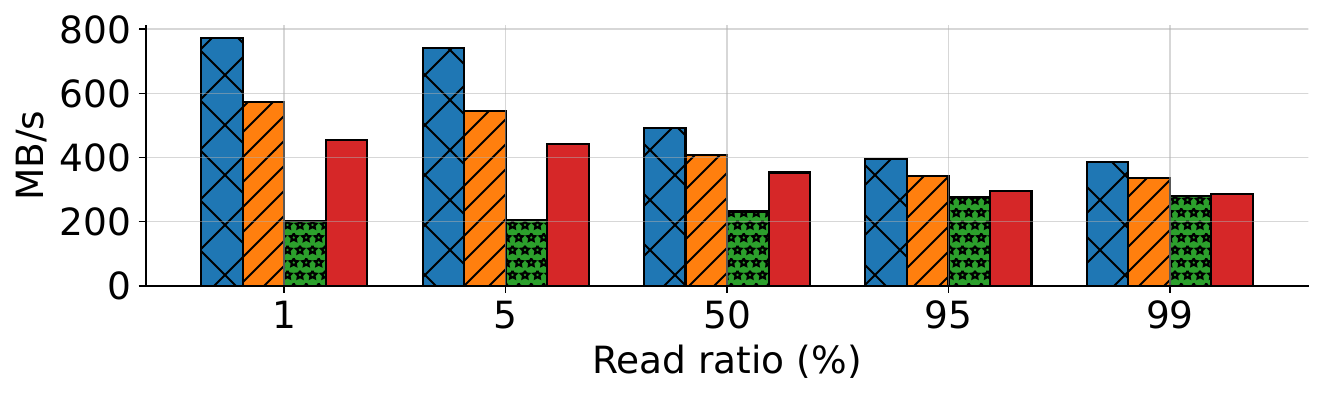}
        \caption{Aggregate read/write throughput vs. read ratio.}
        \label{fig:agg_bw_read_ratio}
    \end{subfigure}
    \hfill
    \begin{subfigure}{0.49\linewidth}
        \centering
        \includegraphics[width=\linewidth]{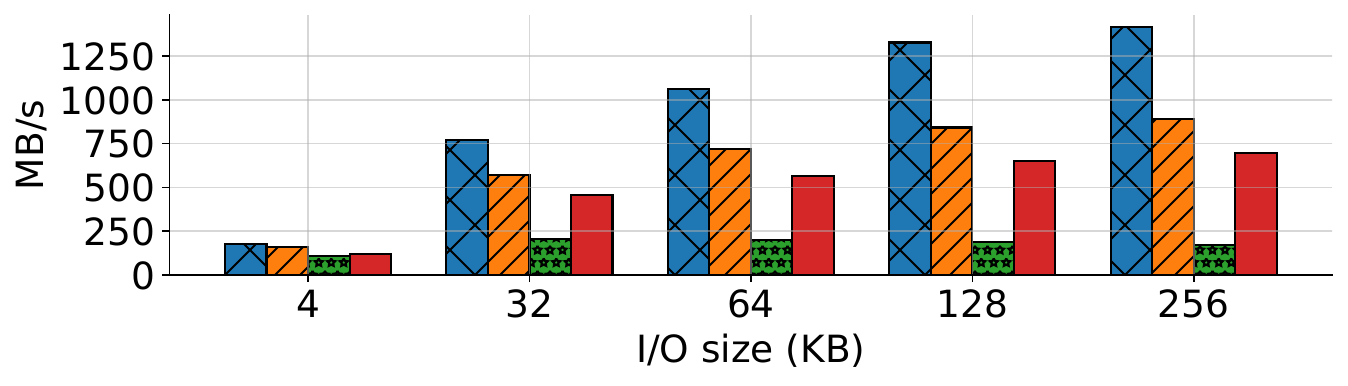}
        \caption{Aggregate read/write throughput vs. I/O size.}
        \label{fig:agg_bw_io_size}
    \end{subfigure}
    \hfill
    \begin{subfigure}{0.49\linewidth}
        \centering
        \includegraphics[width=\linewidth]{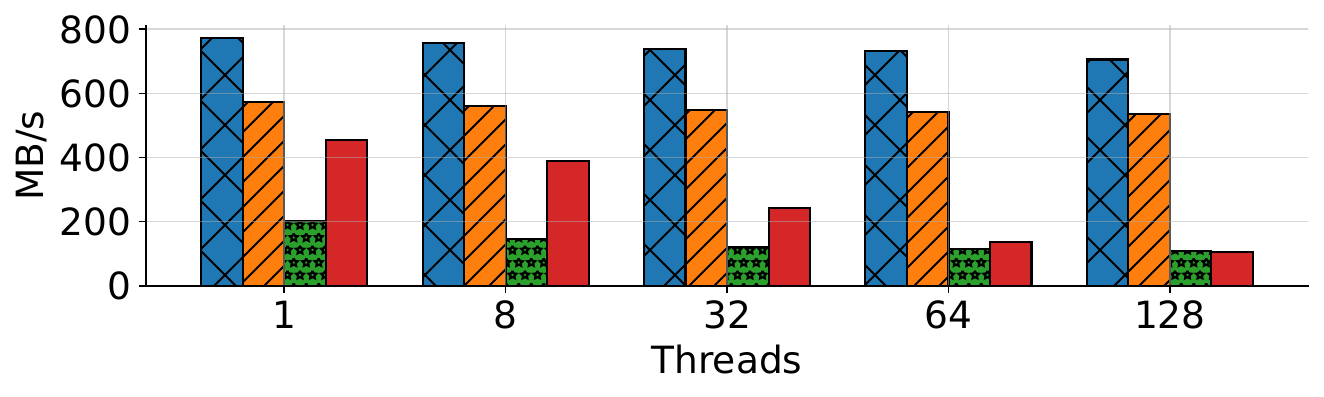}
        \caption{Aggregate read/write throughput vs. thread count.}
        \label{fig:agg_bw_threads}
    \end{subfigure}
    \hfill
    \begin{subfigure}{0.49\linewidth}
        \centering
        \includegraphics[width=\linewidth]{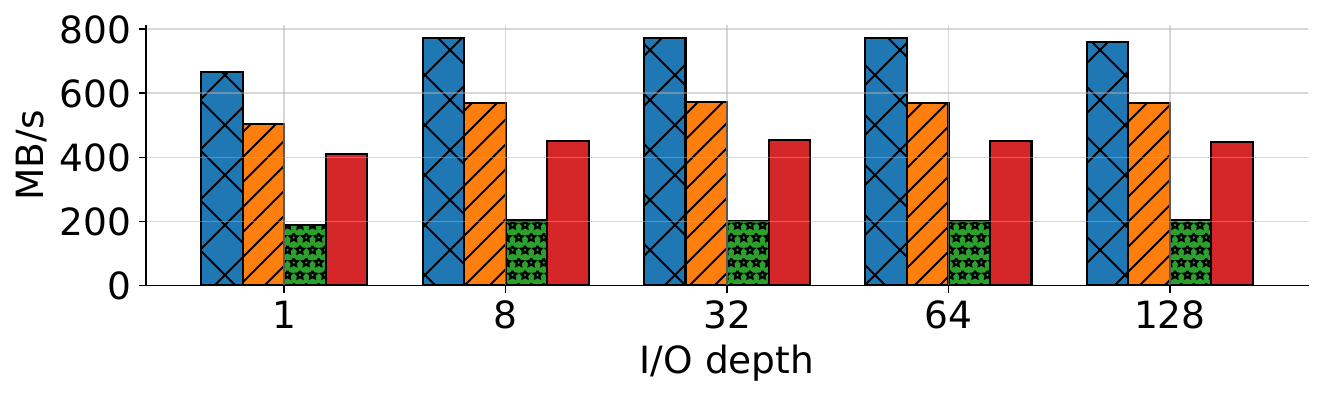}
        \caption{Aggregate read/write throughput vs. I/O depth.}
        \label{fig:agg_bw_depth}
    \end{subfigure}
    \caption{\tool~delivers higher throughputs under write-heavy workloads, larger I/O sizes, and higher I/O depths, and comparable performance otherwise.}
    \label{fig:tp_vs_x}
\end{figure*}

\shortsection{Impact of Workload Skewness}
Next we examine how workload skewness affects performance. The dynamic nature of DMTs enables them to exploit reference locality in skewed workloads, but \tool~can further exploit locality by overriding updates. We therefore anticipate significant speedups under more skewed workloads.

\autoref{fig:tp_vs_skew} shows that \tool~delivers a 2$\times$ speedup over DMTs under a Zipf(1.5) workload and a 2.5$\times$ speedup under a Zipf(2.5) workload. Performance saturates for both trees when $\theta=2.5$. However, \tool~delivers $>$90\% of the AEAD baseline throughput when $\theta>1.5$.

\shortsection{Impact of Read Ratio, I/O Size, Thread Count, and I/O Depth}
Next we examine how other workload parameters affect performance. The read ratio describes the percentage of I/Os that are reads vs. writes; \tool~performance should converge to that of its base data structure (DMT) when workloads are read-heavy, but capitalize on asynchronous updates otherwise. Examining I/O sizes and depths allows understanding how performance changes depending on the nature of how I/Os are issued to the driver from the kernel (e.g., during dirty page writeback). Examining thread counts will elicit whether multiple application threads help or hurt performance in general. We anticipate that \tool~can best amortize update costs under write-heavy workloads, larger I/O sizes, and higher I/O depths. All of these settings provide sufficient conditions to keep the background thread busy while the main thread proceeds with other useful work.

\autoref{fig:agg_bw_read_ratio} shows that \tool~delivers a 2.5$\times$ speedup under write-heavy workloads (read ratio: 1\%). When workloads are read-heavy, \tool~performance is the same as DMTs, because verifications are done synchronously. \tool~scales better with I/O size: at 256~KB I/Os, \tool~observes a speedup of 5.5$\times$. Across application thread counts (\autoref{fig:agg_bw_threads}), a single thread is sufficient to saturate disk throughput; beyond that, contention leads to lower throughputs in general. Across application I/O depths (for applications that use asynchronous I/O on files), \tool~speedups remain largely unaffected.

\begin{tcolorbox}[colframe=white]
    \textbf{Takeaway:} Across storage sizes, workload characteristics, and cache designs, \tool{} provides stable performance guarantees (and up to a 5.5$\times$ speedup over DMTs) that approach those of AEAD (\textbf{\ref{rq:size}}). Even in the worst case, \tool{}'s tail latency is comparable to prior work that provides the same guarantees.
\end{tcolorbox}

\subsubsection{Memory \& Storage Overhead}
\tool~requires additional memory capacity for the queue, but no additional storage capacity over DMTs. DMTs have a 72~B Merkle tree node object. In \tool, request objects in the queue thus occupy $72+16+12=100$~B, for a memory overhead of $1.38\times$. However, we showed that \tool~maintains a $>1.38\times$ speedup with $1.38\times$ less cache memory on average across cache sizes ranging from 0.1\% to 100\%.

\begin{tcolorbox}[colframe=white]
    \textbf{Takeaway:} \tool~delivers better performance per dollar spent on cache memory (\textbf{\ref{rq:storage}}). As a result, it makes efficient use of system resources while providing complete integrity guarantees.
\end{tcolorbox}

\begin{tcolorbox}[colframe=white]
\textbf{Evaluation Summary:}
\tool~shows that, in practical settings, it is possible to achieve near-zero overhead integrity protection. In contrast to conclusions made in prior works, reads never need to be speculative, and challenges associated with writes (state update costs/rollback guarantees) can be largely addressed with judicious parameter selection.
\end{tcolorbox}

\section{Discussion \& Future Work}
\label{sec:discussion}

\shortsection{Optimized Checkpointing}
State updates should in fact be committed before acknowledging to the caller that a state update is durable. However, because sealing costs may be high for some use cases, periodic or batched state updates could be used to improve performance---i.e., only sealing after $N$ state updates. Effectively all prior works have been implemented and evaluated using these methods~\cite{niu2022narrator,angel2023nimble}.

This weakens rollback guarantees. For example, applications rely on flush commands for durability guarantees. Consider that an application receives acknowledgment that a state update $S_0\rightarrow S_1$ is durable, then externalizes some data or acknowledgment to an end-user. If the state update was not actually sealed yet, a malicious crash could be used to roll the system back to state $S_0$ while the user perceives the system to be in state $S_1$. Thus, users would continue execution under the guise of a successful recovery. 

\tool~does not use these optimizations, and we described in our evaluation how \newtxt{practical performance can feasibly be achieved by tuning application write-back parameters (which is standard practice) rather than security parameters.}

\shortsection{\newtxt{Crash Resilience and Recovery}}
\newtxt{Being able to recover from both malicious and non-malicious crashes is an important problem. Currently, \tool{} operates in a fail-stop model, similar to Nimble~\cite{angel2023nimble} and other recent works. This means that correctness is upheld in \tool{} after a crash, but recovery is not guaranteed. How to comprehensively recover from crashes in this context is an open problem that introduces new threat models, security-performance trade-offs, and requires different solutions.}

\newtxt{In general, crashes can be non-malicious or malicious, and they can result from either power loss or software fault (e.g., kernel panic). Non-malicious and malicious crashes resulting from power loss can be handled gracefully by using an uninterruptible power supply or other battery-backed hardware to provide temporary power. Non-malicious and malicious crashes resulting from software faults can be handled (partially) through techniques like journaling. However, an attacker could cause additional crashes during recovery or corrupt, reorder, replay, or truncate parts of the journal. Thus, authentication and freshness of the journal itself must be ensured, among other concerns. This introduces a trade-off space between integrity and availability guarantees that must be modelled and evaluated. We left analysis of these problems to future work.}

\shortsection{Implications for Practical Deployments}
To date, storage integrity methods have either deferred the freshness and consistency problems (e.g., only encrypting and authenticating at the block level), or targeted constrained settings such as key-value stores (where performance requirements are often easier to achieve due to larger object sizes). In contrast, our work demonstrates that disk authenticity, transactional consistency, and freshness can be achieved for general-purpose block devices, and with minimal overhead compared to encryption-only approaches. In the same way that trusted execution environments (TEEs) have enabled general-purpose compute workloads to be directly provided with new guarantees, \tool{} can be used to ensure storage integrity for arbitrary application workloads with untrusted storage hardware.

Such an abstraction enables new practical applications. For instance, abstraction to the block layer enables commodity databases to provide strong storage integrity guarantees. Virtual machine images could be freshness-guaranteed at boot time to prevent downgrade attacks; such protections are increasingly valuable as diverse software supply chains increase the risk of severe security vulnerabilities in older versions of software. Most importantly, these assurances can be achieved without trusting the underlying storage provider. They could prove especially useful for providers that are required to assure protection of user data, such as under the EU GDPR, as both private compute and data storage can now be fully attested even if the underlying physical provider falls outside national jurisdiction.

\shortsection{Asynchronous Integrity Checking as a Pipeline Problem}
Ultimately, faulty data from untrusted storage only has negative security consequences when the application performs operations based on that data. Our current threat model assumes that any code executed based on erroneous data constitutes an integrity violation, but future work could see further performance improvements by precisely characterizing the flow from erroneous data to externally-visible actions. 

In effect, read verification could be treated the same way as memory reads in an out-of-order CPU pipeline. Verification could be deferred until an externally-visible action (e.g., another disk write, network requests, or even mutation of shared memory) is taken as a result of that data. Because read verification \textit{always} succeeds absent an injected fault (an unrecoverable event), application code can proceed under the assumption that verification succeeds, and a segmentation fault can be thrown in the event verification fails. The effectiveness of such an approach will rely on whether side effects of reads can be accurately identified with minimal performance overhead.

\section{Related Work}
\label{sec:related}

Merkle hash trees have emerged as a fundamental building block of secure storage systems, secure memory systems, and authenticated data structures broadly~\cite{buterin2016ethereum,naor2000certificate,android-dm-verity,taassori2018vault,avanzi2022cryptographic,gassend2003caches,mckeen_innovative_2013,erway2015dynamic,arasu2021fastver,sinha2018veritasdb,li2006dynamic,feng2021scalable}. We discuss these related works below.

\shortsection{Secure Storage \& Memory}
Attacks against cloud services have motivated a significant amount of research on building secure storage and memory systems. For storage, authenticated disk encryption has become the standard method to protect the confidentiality and authenticity of cloud disks~\cite{khati2017full,brovz2018practical}. Linux \texttt{dm-verity} implements a block-level Merkle hash tree and is seeing increasing use in emerging cloud services~\cite{aws_bottlerocket}. It has also become critical to providing verified boot for mobile and embedded device storage~\cite{android-dm-verity}. Secure memories like Intel SGX also rely on Merkle trees to protect the integrity (notably, the freshness) of volatile memory~\cite{yan2006improving,gassend2003caches,feng2021scalable,rogers2007using,mckeen_innovative_2013,tsai_graphene-sgx_nodate,priebe_sgx-lkl_2020}.

Recent works such as FastVer~\cite{arasu2021fastver}, Concerto~\cite{arasu2017concerto}, dm-x~\cite{chakraborti2017dm}, and DMT~\cite{burke2025scalable} have designed optimizations to reduce Merkle tree overheads for storage. For example, FastVer and Concerto use the batching approach discussed in~\autoref{sec:motivation}, which sacrifices integrity guarantees. dm-x relies on high-degree (e.g., 128-ary) trees to reduce the tree height and thus the number of hashes that have to be computed on every read or write to storage. DMT implements a hash tree that dynamically adjusts itself at runtime based on workload patterns; this work also showed that the high-degree trees used by dm-x have significant overheads at large capacities and thus high-degree hash trees are not a suitable solution for real-world systems. Recent works such as VAULT~\cite{taassori2018vault} and Penglai~\cite{feng2021scalable} have also demonstrated that hash tree overheads are significant for secure memories and designed optimizations addressing them. These works have relied primarily on using high-degree trees and various caching techniques to reduce overheads. 

Our work builds on these efforts by moving beyond data structure-level optimizations and examining the potential to exploit concurrency to cut costs entirely. We show that with careful design, it is indeed possible to deliver integrity protection with near-zero overhead.

\shortsection{Integrity Data Structures}
The theoretical properties of Merkle hash trees and other authenticated data structures (e.g., authenticated skip lists) have also been examined by the cryptography community in the context of blockchains~\cite{buterin2016ethereum}, certificate revocation systems~\cite{melara2015coniks}, and provable data possession schemes~\cite{erway2015dynamic,dahlberg2016efficient,tamassia2003authenticated,miller2014authenticated,crosby2011authenticated}. These works have broadly examined the problem of integrity checking outsourced storage in an \textit{offline} model. In an offline model, integrity checks occur on-demand as requested by the data owner. This model reflects a batching approach. It suits archival cloud storage, where users issue infrequent integrity checks over their data and performance is not a primary concern.
In contrast, integrity checks are tightly coupled with applications in an \textit{online} model: they are executed each time data is read from or written to disk. This model is necessary for cloud applications that make real-time control decisions based on data read from an untrusted disk, and where performance is a primary concern. 

Prior works have converged on Merkle hash trees as the dominant solution to protecting integrity~\cite{crosby2011authenticated}. However, while their logarithmic complexity has translated into relatively lower overheads for an offline setting, recent works have shown that their overheads in an online setting are often prohibitive~\cite{arasu2021fastver,burke2025scalable,taassori2018vault}. \tool~showed that it is possible to minimize these overheads in the context of high-performance storage devices.

\shortsection{Rollback Protection}
Rollback attacks are similar to but distinct from replay attacks~\cite{niu2022narrator,strackx2016ariadne,parno2011memoir}. Storage systems require protection against both. Our focus is not on designing new rollback protection protocols; secure methods to prevent rollbacks are known---for example, by sealing the Merkle root with a tamper-resistant counter. We discuss rollbacks, because introducing asynchrony complicates rollback guarantees. Specifically, \tool~uses a checkpointing mechanism to ensure the liveness of the background thread (i.e., that state updates are not being delayed) and to ensure that state updates are in fact being committed. This happens during an \texttt{FSYNC} system call.

These prior works do not prescribe precisely \textit{when} state updates should occur, only the means through which an update will occur if requested~\cite{niu2022narrator,strackx2016ariadne,parno2011memoir,brandenburger2017rollback}. The common wisdom is that deciding when it is appropriate to a commit a state update is application-specific and has thus largely been left up to application developers. Optimizations such as batch committing a state change after $N$ state updates have also been proposed to reduce costs associated with sealing. Effectively all prior works have been implemented and evaluated using these methods. The result is a lack of sound guidance on when and how to properly commit state updates, and weak rollback guarantees in general. 

\tool~prescribes that state updates are committed at the time of a flush, which exactly reflects the disk state that applications perceive to be durable. Moreover, our evaluation showed that batching commits is not necessary to achieve good performance: \tool~commits an update on every flush call and still delivers speedups.

\section{Conclusion}
\label{sec:conclusion}

Merkle hash trees provide robust integrity guarantees for data stored in the cloud. However, they introduce additional compute and I/O costs on the I/O critical path, which degrades performance. We showed that closing the performance gap requires moving beyond data structure-level optimizations to algorithm-level optimizations. We proposed a new integrity checking method called \tool~that exploits concurrency to reduce hashing costs on the critical path while still providing equivalent security guarantees. Our evaluation showed that it is possible to achieve near-zero overhead integrity protection for untrusted cloud storage.

% conference papers do not normally have an appendix

% use section* for acknowledgment
\ifCLASSOPTIONcompsoc
  % The Computer Society usually uses the plural form
  \section*{Acknowledgments}
\else
  % regular IEEE prefers the singular form
  \section*{Acknowledgment}
\fi
We thank the anonymous reviewers and shepherd for their insightful feedback. This work was supported in part by the Semiconductor Research Corporation (SRC) and DARPA.

% trigger a \newpage just before the given reference
% number - used to balance the columns on the last page
% adjust value as needed - may need to be readjusted if
% the document is modified later
%\IEEEtriggeratref{8}
% The "triggered" command can be changed if desired:
%\IEEEtriggercmd{\enlargethispage{-5in}}

\bibliographystyle{IEEEtran}
\bibliography{IEEEabrv,bib}

\end{document}